\newtheorem{theorem}{Theorem}
\newtheorem{lemma}{Lemma}
\newtheorem{definition}{Definition}
\newtheorem{corollary}{Corollary}
\newcommand{\grad}{{\rm grad}}
\newcommand{\N}{\mathcal{N}}
\newcommand{\V}{{\mathcal V}}
\newcommand{\E}{{\mathcal E}}
\newcommand{\Q}{{\mathcal Q}}
\newcommand{\M}{{\mathcal M}}
\newcommand{\F}{{\mathcal F}}
\newcommand{\R}{{\mathbb R}}
\newcommand{\tr}{{\rm tr}}
\newcommand{\sk}{{\rm sk}}
\newcommand{\diag}{{\rm diag}}
\newcommand{\ewi}{R_{wi}}
\renewcommand{\L}{{\cal L}}
\newcommand{\TSO}{T_{\ewi}SO(3)}
\title{3-D Visual Coverage Based on
Gradient Descent Techniques on Matrix Manifold
and Its Application to Moving Objects Monitoring}
\author{Takeshi Hatanaka, Riku Funada and Masayuki Fujita
\thanks{Takeshi Hatanaka, Riku Funada and Masayuki Fujita are with the Department of Mechanical and 
Control Engineering, Tokyo Institute of Technology, Tokyo 152-8552, JAPAN
        {\tt\small {hatanaka,funada,fujita}@ctrl.titech.ac.jp}}}
\begin{document}
\maketitle
\thispagestyle{empty}
\pagestyle{empty}

\begin{abstract}
This paper investigates coverage control 
for visual sensor networks based on gradient descent
techniques on matrix manifolds.
We consider the scenario 
that networked vision sensors with controllable orientations
are distributed over 3-D space to 
monitor 2-D environment.
Then, the decision variable must be constrained
on the Lie group $SO(3)$.
The contribution of this paper is two folds.
The first one is technical, namely we formulate the coverage problem as an optimization
problem on $SO(3)$ without introducing local parameterization like
Eular angles and directly apply the gradient descent algorithm on 
the manifold.
The second technological contribution is to 
present not only the coverage control scheme but
also the density estimation process including image processing
and curve fitting while exemplifying its effectiveness 
through simulation of moving objects monitoring.
\end{abstract}

\section{Introduction}

Aspirations for safety and security of human lives against 
crimes and natural disasters 
motivate us to establish smart monitoring systems 
to monitor surrounding environment.
In this regard, vision sensors are expected as powerful 
sensing components since they
provide rich information about the outer world. 
Indeed, visual monitoring systems have been already
commoditized and are working in practice.
Typically, in the current systems, 
various decision-making and situation awareness processes 
are conducted at a monitoring center by human operator(s),
and partial distributed computing at each sensor is, if at all, done 
independently of the other sensors.
However, as the image stream increases, 
it is desired to distribute the entire process
to each sensor while achieving total optimization
through cooperation among sensors.

Distributed processing over the visual sensor networks 
is actively studied in recent years motivated by a variety of
application scenarios 
\cite{amit2}--\cite{HWF_CDC13}.
Among them, several papers address optimal monitoring of the environment
assuming mobility of the vision sensors
\cite{EYE}--\cite{HWF_CDC13},
where it is required for the network to ensure
the best view of a changing environment \cite{EYE}.
The problem is related to coverage control \cite{CL_EJC05}--\cite{BCM_ES05},
whose objective is to deploy mobile sensors efficiently
in a distributed fashion.
A typical approach to coverage control is to employ
the gradient descent algorithm
for an appropriately designed aggregate objective function.
The objective function is usually formulated by
integrating the product 
of a sensing performance function of a point
and a density function indicating the 
relative importance of the point.
The approach is also applied to visual coverage 
in \cite{EYE}--\cite{GTF_CDC08}.
The state of the art of coverage control is compactly summarized in \cite{EYE},
and a survey of related works in the computer vision society is found in \cite{survey}.

In this paper, we consider a visual coverage problem 
under the situation where vision sensors with controllable 
orientations are distributed over the 3-D space
to monitor 2-D environment.
In the case, the control variables i.e. the rotation matrices
must be constrained on the Lie group $SO(3)$,
which distinguishes the present paper from the works
on 2-D coverage \cite{cortes}--\cite{ZM_SIAM13}.
On the other hand, \cite{EYE,DSMFC_SP12,HWF_CDC13} consider
situations similar to this paper.
\cite{DSMFC_SP12,HWF_CDC13} take game theoretic approaches
which allow the network to achieve globally optimal coverage with
high probability but instead the convergence speed tends
to be slower than the standard gradient descent approach.
In contrast, \cite{EYE} employs the gradient approach
by introducing a local parameterization of
the rotation matrix and regarding the problem as optimization on a vector space.

This paper approaches the problem differently from \cite{EYE}.
We directly formulate the problem as optimization on $SO(3)$
and apply the gradient descent algorithm on matrix manifolds \cite{AMS_BK}.
This approach will be shown to allow one to parametrize the 
control law for a variety of underactuations imposed by the hardware constraints.
This paper also addresses density estimation from acquired data,
which is investigated in \cite{IJRR} for 2-D coverage.
However, we need to take account of the following 
characteristics of vision sensors:
(i) the sensing process inherently includes
projection of 3-D world onto 2-D image, and
(ii) explicit physical data is not provided.
To reflect (i), we incorporate the projection into the 
optimization problem on the embedding manifold of $SO(3)$.
The issue (ii) is addressed technologically, where 
we present the entire process including image processing 
and curve fitting techniques.
Finally, we demonstrate the utility of the present coverage 
control strategy through simulation of moving objects monitoring.

\subsection*{Preliminary: Gradient on Riemannian Manifold}

Let us consider a Riemannian manifold $\M$ whose
tangent space at $x \in \M$ is denoted by $T_x\M$,
and the corresponding Riemannian metric, an smooth inner product, 
defined over $T_x\M$ is denoted by $\langle \cdot,\cdot\rangle_x$.
Now, we introduce a smooth scalar field $f(\cdot): \M \to \R$ defined over the manifold $\M$,
and the derivative of $f$ at an element $x \in \M$ in the direction $\xi \in T_x\M$,
denoted by $Df(x)[\xi]$.
We see from Definition 3.5.1 and (3.15) of \cite{AMS_BK}
that the derivative $Df(R)[\Xi]$ is defined by
\[
Df(x)[\xi] = \left.\frac{d f(\gamma(t))}{dt}\right|_{t = 0},
\]
where $\gamma: \R \to \M$ is a smooth curve
such that $\gamma(0) = x$.
In particular, when $\M$ is a linear manifold with $T_x\M = \M$,
 the derivative $Df(R)[\Xi]$ is equal to the classical directional derivative
\begin{equation}
Df(x)[\xi] = \lim_{t\to 0}\frac{f(x + t\xi)-f(x)}{t}.
\label{eqn:derivative_lin_M}
\end{equation}

Now, the gradient of $f$
is defined as follows.
\begin{definition}\cite{AMS_BK}
\label{def:grad_M}
Given a smooth scalar field $f$ defined over a Riemannian manifold $\M$,
the gradient  of $f$ at $x$, denoted by $\grad_x^{\M} f$, is defined as the unique
element of $T_x\M$ satisfying
\[
\langle \grad_x^{\M} f, \xi\rangle_x = Df(x)[\xi]\ \ {\forall \xi}\in T_x\M.
\]
\end{definition}
Suppose now that $\M$ is a Riemannian 
submanifold of a Riemannian manifold $\N$,
namely $T_x\M$ is a subspace of $T_x \N$ and
they share a common Riemannian metric.
In addition, the orthogonal projection of an element of $T_x \N$
onto $T_x\M$ is denoted by $P_x: T_x \N \to T_x \M$.
Then, the following remarkable lemma holds true.
\begin{lemma}\cite{AMS_BK}
\label{lem:grad_proj}
Let $\bar f$ be a scalar field defined over $\N$
such that the function $f$ defined on $\M$ 
is a restriction of $\bar f$.
Then, the gradient of $f$ satisfies the equation
\begin{equation}
\grad_x^{\M}f = P_x \grad_x^{\N} \bar f.
\label{eqn:grad_proj}
\end{equation}
\end{lemma}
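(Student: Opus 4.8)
The plan is to use the defining property of the gradient from Definition~\ref{def:grad_M} together with the fact that $\M$ is a Riemannian submanifold of $\N$, so that the inner products agree on $T_x\M$ and the tangent space splits orthogonally as $T_x\N = T_x\M \oplus (T_x\M)^{\perp}$ with $P_x$ the associated projection. First I would observe that for any $\xi \in T_x\M \subseteq T_x\N$, the curve $\gamma$ realizing the derivative $Df(x)[\xi]$ also realizes $D\bar f(x)[\xi]$, because $f = \bar f|_{\M}$ and $\gamma$ stays in $\M$; hence $Df(x)[\xi] = D\bar f(x)[\xi]$ for all $\xi \in T_x\M$.

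Next I would invoke Definition~\ref{def:grad_M} applied to $\bar f$ on $\N$, which gives $\langle \grad_x^{\N}\bar f, \xi\rangle_x = D\bar f(x)[\xi]$ for all $\xi \in T_x\N$, and in particular for all $\xi \in T_x\M$. Now decompose $\grad_x^{\N}\bar f = P_x\,\grad_x^{\N}\bar f + (I - P_x)\,\grad_x^{\N}\bar f$, where the second term lies in $(T_x\M)^{\perp}$. For $\xi \in T_x\M$, orthogonality kills the normal component, so $\langle P_x\,\grad_x^{\N}\bar f, \xi\rangle_x = \langle \grad_x^{\N}\bar f, \xi\rangle_x = D\bar f(x)[\xi] = Df(x)[\xi]$. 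Since $P_x\,\grad_x^{\N}\bar f \in T_x\M$ and it satisfies the identity $\langle P_x\,\grad_x^{\N}\bar f, \xi\rangle_x = Df(x)[\xi]$ for every $\xi \in T_x\M$, the uniqueness clause in Definition~\ref{def:grad_M} forces $\grad_x^{\M}f = P_x\,\grad_x^{\N}\bar f$, which is \eqref{eqn:grad_proj}.

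There is no serious obstacle here; the result is essentially a bookkeeping exercise once the submanifold hypothesis is unpacked. The one point that deserves care — and the closest thing to a ``hard part'' — is justifying that the same tangent vector $\xi$ and the same realizing curve may be used for both $Df(x)[\xi]$ and $D\bar f(x)[\xi]$, i.e. that restricting $\bar f$ to $\M$ does not change the directional derivative along admissible directions. This is where the hypotheses ``$T_x\M$ is a subspace of $T_x\N$'' and ``$f$ is a restriction of $\bar f$'' are genuinely used, and I would state that step explicitly rather than gloss over it. The rest is the orthogonal decomposition argument and an appeal to uniqueness.
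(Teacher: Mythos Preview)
Your argument is correct and is exactly the standard proof of this fact (as given in \cite{AMS_BK}): the paper does not supply its own proof of Lemma~\ref{lem:grad_proj} but simply cites the reference, so there is nothing further to compare.
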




\begin{figure}[t]
\begin{center}
\includegraphics[width=7cm]{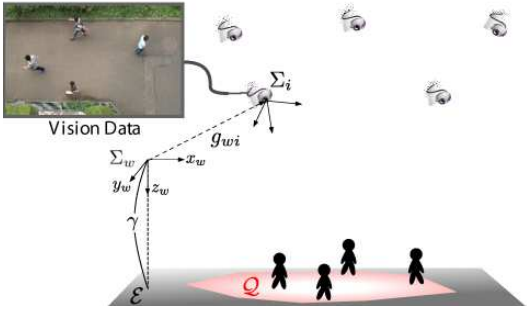}
\caption{Targeted scenario.}
\label{fig:scenario}
\end{center}
\end{figure}

\section{Targeted Scenario}


\subsection{Vision Sensors and Environment}

We consider the situation illustrated in Fig. \ref{fig:scenario} where
$n$ vision sensors $\V = \{1,\cdots, n\}$ are located in 3-D Euclidean space.
Let the fixed world frame be denoted by 
$\Sigma_w$ and the body fixed frame of sensor $i \in \V$
by $\Sigma_i$.
We also denote the position vector of the origin of $\Sigma_i$
relative to $\Sigma_w$ by $p_{wi} \in \R^3$,
and the rotation matrix of $\Sigma_i$ relative to $\Sigma_w$
by $\ewi\in SO(3) := \{R \in \R^{3\times 3}|\ RR^T = R^TR = I_3,\
\det(R) = +1\}$.
Then, the pair $g_{wi} = (p_{wi},\ewi)\in SE(3) := \R^3 \times SO(3)$,
called {\it pose}, represents
the configuration of sensor $i$.
In this paper, each sensor's position $p_{wi}$ is assumed to be fixed,
and sensors can control only their orientations $\ewi$.
In addition, we suppose that sensors are localized and calibrated {\it a priori}
and $g_{wi}$ is available for control.

We use the notation $g_{wi}$ to describe not only the pose
but also a coordinate transformation operator similarly to \cite{MLS_BK}.
Take two frames $\Sigma_a$ and $\Sigma_b$. 
Let the pose of the frame $\Sigma_b$ relative to $\Sigma_a$ be denoted by $g_{ab} = (p_{ab},R_{ab})$,
and the coordinates of a point relative to $\Sigma_a$ by $p_b$.
Then, the coordinates $p_a$ of the point relative to $\Sigma_a$ are 
given as
\[
p_a = g_{wi}(p_b) :=
R_{ab}p_b + p_{ab}.
\]

Let us next define the region to be monitored by a group of sensors $\V$.
In this paper, we assume that 
the region is a subset of a 2-D plane (Fig. \ref{fig:scenario}), where the 
2-D plane is called the environment and the subset to be monitored
is called the mission space.
Let the set of coordinates of all points in the environment and the mission space 
relative to $\Sigma_w$ are respectively 
denoted by $\E$ and $\Q$.
Just for simplicity, we suppose that 
the world frame $\Sigma_w$ is attached so that its
$x,y$-plane is parallel to the environment
(Fig. \ref{fig:scenario}).
Then, the set $\E$ is formulated as 
\[
\E = \{q \in \R^3|\ {\bf e}_3^Tq = \gamma\}
\]
with some constant $\gamma \in \R$,
where ${\bf e}_i \in \R^3,\ i=1,2,3$ is an $i$-th standard basis.
Suppose that a metric $\phi: \E \to \R_+$,
called a density function, indicating the relative importance of 
every point $q\in \E$ is defined over $\E$.
In this paper, the function $\phi(q)$ is assumed to be small if point $q$ is important
and to satisfy $\phi(q) = \bar{\phi} \ {\forall q} \notin \Q$ with a constant
$\bar{\phi}$ such that $\bar{\phi} > \sup_{q\in \Q}\phi(q)$.

\subsection{Geometry}

A vision sensor has an image plane containing the sensing array, 
whose elements, called pixels, 
provide the numbers reflecting 
the amount of light incident.
We assume that the image plane is a rectangle as illustrated in Fig. \ref{fig:image_plane1}.
The set of position vectors of all points on the image plane relative to the sensor frame $\Sigma_i$
is denoted by ${\mathcal F}_i \subset \R^3$. 
Now, the axes of the sensor frame $\Sigma_i$ is assumed to be selected 
so that its $x,y$-plane is parallel to the image plane
and $z$-axis perpendicular to the image plane passes through 
the focal center of the lens.
Then, the third element of any point in the set ${\mathcal F}_i$ 
must be equal to the focal length
$\lambda_i$.

\begin{figure}[t]
\begin{center}
\begin{minipage}{4cm}
\begin{center}
\includegraphics[width=4cm]{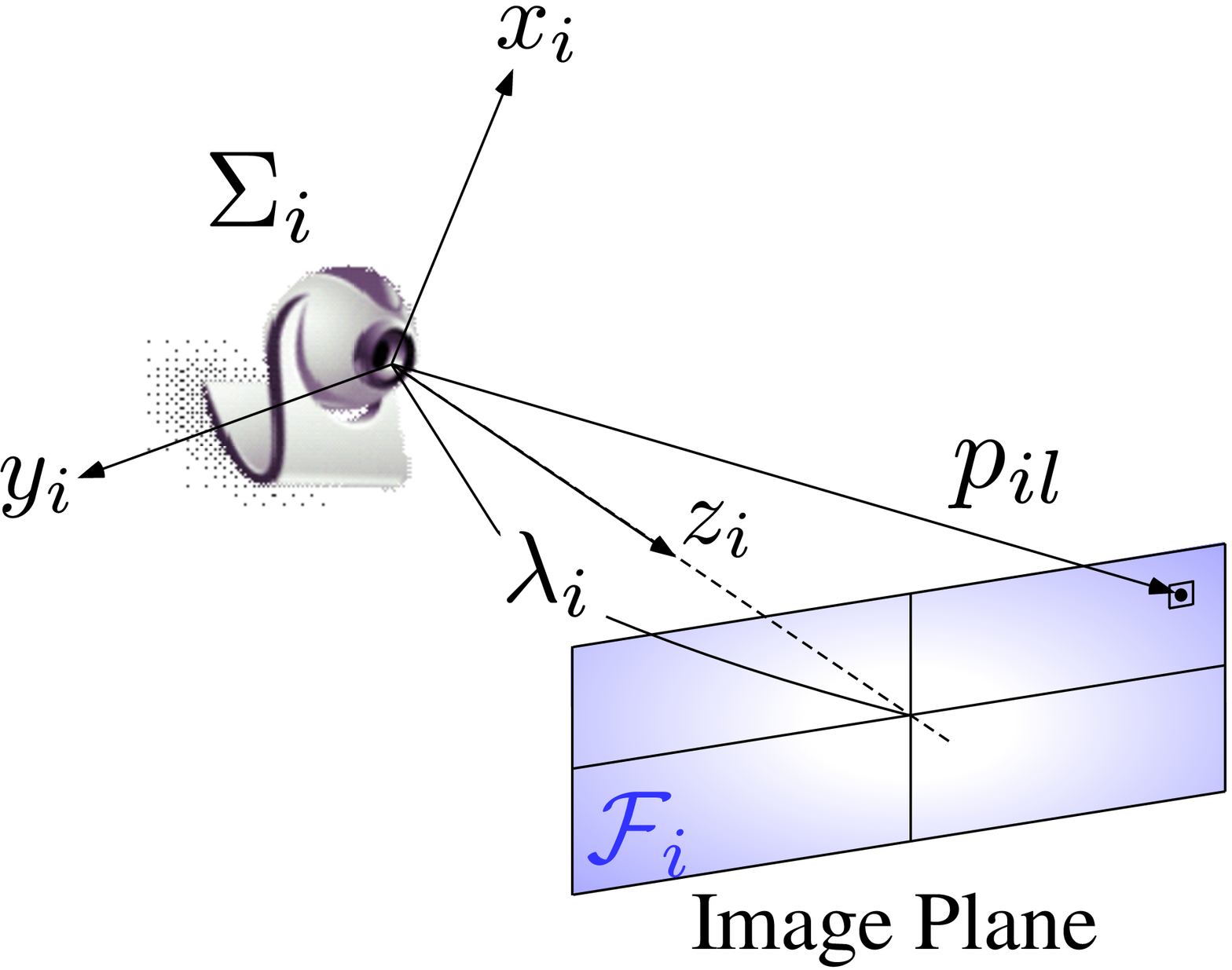}
\caption{Image plane and pixel.}
\label{fig:image_plane1}
\end{center}
\end{minipage}
\hspace{.2cm}
\begin{minipage}{4cm}
\begin{center}
\includegraphics[width=4cm]{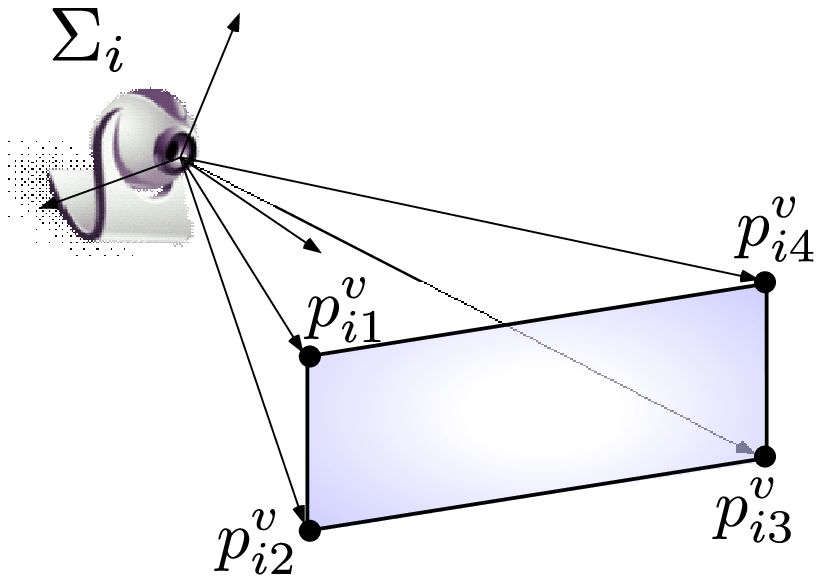}
\caption{Vertices of image plane.}
\label{fig:image_plane2}
\end{center}
\end{minipage}
\end{center}
\end{figure}

We next denote the set of pixels of sensor $i\in \V$ by $\L_i = \{1,\cdots, L_i\}$ and
the position vector of the center of the $l$-th pixel on the image plane of sensor $i$
relative to $\Sigma_i$ by $p_{il} \in {\mathcal F}_i$.
Since $l$ in $p_{il}$ and $p_{jl}$ deffer,
we may need to use the notation like $l_i$ but we omit the subscript
to reduce notational complexity.
In addition, the positions of its vertices relative to 
$\Sigma_i$ are denoted by $p^{v}_{i1}, \cdots, p^{v}_{i4} \in {\mathcal F}_i$
(Fig. \ref{fig:image_plane2}).

When a point on the environment with coordinates $q_i$ relative to $\Sigma_i$
is captured by a sensor $i$ with $g_{wi}$,
the point is projected onto the image plane as illustrated in
Fig. \ref{fig:proj1}.
If the coordinates of the projected point 
are denoted by $q_i^{\rm im} \in {\mathcal F}_i$,
it is well known that the projection is formulated as
\begin{equation}
q_i^{\rm im} = \Psi_i(q_i) = 
\frac{\lambda_i}{{\bf e}_3^Tq_i}
q_i.
\end{equation}
It is not difficult to show that the inverse map $\Phi_i$ 
of the map $\Psi_i$ (Fig. \ref{fig:proj1}) from $q^{\rm im}_i$
to $q_i$ is given by
\begin{equation}
q_i = \Phi_i(q^{\rm im}_{i}) =
\frac{\delta_iq^{\rm im}_{i}}{{\bf e}_3^T \ewi q^{\rm im}_{i}}.
\label{eqn:phi}
\end{equation}
Note that, while $\Psi_i$ is independent of $\ewi$,
the map $\Phi_i$ depends on $\ewi$ and hence we describe $\Phi_i$ as 
$\Phi_i(q^{\rm im}_{i}; \ewi)$.

\begin{figure}[t]
\begin{center}
\includegraphics[width=6.5cm]{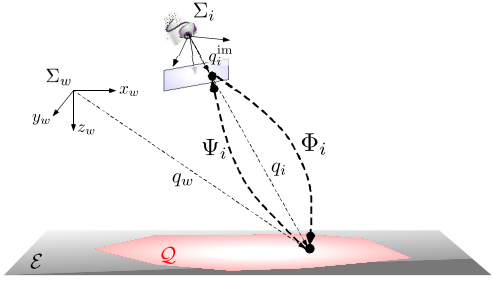}
\caption{Projections $\Psi_i$ and $\Phi_i$.}
\label{fig:proj1}
\end{center}
\end{figure}

Using the map $\Phi_i$, we denote by $FOV_i(\ewi)$ the set of coordinates
of the field of view (FOV) of each sensor $i$ relative to $\Sigma_w$, which is also a polytope.
Its $\V$-polytope representation is trivial, namely it is given by
the convex hull of the four points with coordinates
\begin{equation}
{q}^{v}_{wl}(\ewi) = g_{wi}\circ \Phi_i(p^{v}_{il}; \ewi)
\label{eqn:arxiv1}
\end{equation}
relative to $\Sigma_w$
(Fig. \ref{fig:FOV2}).
The ${\mathcal H}$-polytope representation is also 
computed efficiently as follows.

Suppose now that $l$-th side line segment ($l = 1,2,3,4$) specifying 
the boundary of the image plane connects the vertices $1$ and $2$
without loss of generality.
Then, the line projected onto the environment is also a 
line segment whose vertices have coordinates 
${p}^{v}_{w1}(\ewi)$ and ${p}^{v}_{w2}(\ewi)$
relative to $\Sigma_w$, and hence the line is formulated as
\[
\left\{q \in \E\left|\ A^i_l(\ewi) 
\begin{bmatrix}
q_1\\
q_2
\end{bmatrix}
 = 1,\ q_3 = \gamma \right. \right\},
\]
where the matrix $A_l(\ewi) \in \R^{2 \times 2}$ is derived as
\[
A^i_l(\ewi) = \begin{bmatrix}
1& 1
\end{bmatrix}
\begin{bmatrix}
\begin{bmatrix}
{\bf e}_1^T\\
{\bf e}_2^T
\end{bmatrix}
{q}^{v}_{w1}(\ewi) &\begin{bmatrix}
{\bf e}_1^T\\
{\bf e}_2^T
\end{bmatrix} {q}^{v}_{w2}(\ewi)
\end{bmatrix}^{-1}
\]
from the fact that ${q}^{v}_{w1}(\ewi)$ and ${q}^{v}_{w2}(\ewi)$
are on the line.
Since the coordinates $p_w^0 = g_{wi}\circ \Phi_i(p_i^0; \ewi)$
for any interior $p_i^0$ of $\F_i$ must be inside the FOV,
a half space specifying the FOV is described by the inequality
$\bar A_l^i(\ewi) q \leq \bar a_l^i(\ewi)$ with
\[
\bar A_l^i(\ewi) := \bar a_l^i(\ewi)A_l^i(\ewi),\ 
\bar a_l^i := {\rm sign} (1 - A_l^i(\ewi)p_w^0).
\]
In the same way, we can find the pair $\bar A_l^i(\ewi),\ 
\bar a_l^i$ for all $l = 1,2,3,4$.
Stacking them allows one to formulate the FOV as
\[
FOV_i(\ewi) = 
\left\{q \in \E\left|\ A^i(\ewi) 
\begin{bmatrix}
q_1\\
q_2
\end{bmatrix}
 = a^i(\ewi),\ q_3 = \gamma \right. \right\}.
\]


\section{Coverage for a Single Sensor}

In this section, we consider a simple case with 
$\V = \{i\}$.

\subsection{Objective Function}

Let us first define the objective function
to be {\it minimized} by sensor $i$.
In this paper, we basically take the concept of coverage control \cite{CL_EJC05,BCM_BK09}, 
where the objective function is defined by 
a sensing performance function and a density function at 
a point $q \in \E$.
Note however that
we accumulate the function only at the center of the pixels
projected onto the environment $\E$ in order to reflect
the discretized nature of the vision sensors.
In the sequel, the sensing performance function and the density function
at $q \in \E$ are denoted by
$f_i(q): \E \to \R_+$ and 
$\phi(q): \E \to \R_+$, respectively.

Let us first define a function $q_{wl}(\ewi): SO(3) \to \E$ 
providing the coordinates in $\Sigma_w$ of the point on $\E$
which is captured by $l$-th pixel as
\begin{eqnarray}
q_{wl}(\ewi) \!\!&\!\!=\!\!&\!\! g_{wi}\circ \Phi_i(p_{il};\ewi)  
= \frac{\delta_i \ewi p_{il}}{{\bf e}_3^T \ewi p_{il}} + p_{wi}.
\label{eqn:q_wl}
\end{eqnarray}
Then, the objective function takes the form of
\begin{eqnarray}
H_i(\ewi) \!\!&\!\!=\!\!&\!\! \sum_{l \in \L_i} w_{il} (f_i\circ q_{wl}(\ewi)) (\phi \circ q_{wl}(\ewi)),
\label{eqn:obj_single}
\end{eqnarray}
where $w_{il} > 0$ is a weighting coefficient.
If we impose a large $w_{il}$ on the pixel at around the center of the image,
the sensor tends to capture the important area at around the image center.
If we need to accelerate computation, 
replacing $\L_i$ in (\ref{eqn:obj_single}) by its subset
is an option.
In order to ensure preciseness, we need to introduce an extended function
allowing $\pm \infty$, 
but we will not mention it
since it can be easily avoided by choosing
$f_i(q)$ appropriately.

\begin{figure}[t]
%
%
\begin{center}
\includegraphics[width=6.5cm]{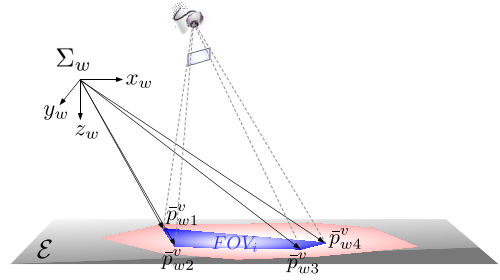}
\caption{Field of view $FOV_i$}
\label{fig:FOV2}
\end{center}
\end{figure}

Similarly to \cite{BCM_BK09}, we let
the performance function $f_i(q)$ depend only on the distance $\|q - p_{wi}\|$.
Remark however that, differently from \cite{BCM_BK09}, 
the third element of $q - p_{wi}$ is not controllable
since the sensor is fixed.
This may cause a problem that penalty of seeing distant area
does not work in the case that the element is large enough.
However, the element is not ignorable since it reflects heterogeneous 
characteristics of vision sensors in the multi-sensor case.
We thus use the weighting distance as
\begin{equation}
f_i(q) = \frac{1}{\lambda_i} \|q - p_{wi}\|_{W}^2 = \frac{1}{\lambda_i}(q-p_{wi})^TW(q - p_{wi}).
\label{eqn:perf1}
\end{equation}
with $W \geq 0$, where $1/\lambda_i$ is introduced 
since the distance is scaled by the focal length. 
Suppose that $W$ is set as $W = \diag([w\ w\ 1])$.
Then, a large $w$ imposes a heavy penalty on viewing distant area
and a small $w$ a light penalty on it.
In particular, when $q = q_{wl}(\ewi)$ for some $l \in \L_i$,
(\ref{eqn:perf1}) is rewritten as
\begin{equation}
f_i \circ q_{wl}(\ewi) 
= \frac{ \tilde \delta_i \|\ewi p_{il}\|_W^2}{\|{\bf e}_3^T \ewi p_{il}\|^2},\ \
\tilde \delta_i = \frac{\delta_i^2}{\lambda_i}.
\label{eqn:perf2}
\end{equation}

Once the density function $\phi$ is given,
the goal is reduced to minimization of
(\ref{eqn:obj_single}) with (\ref{eqn:perf2})
under the restriction of $\ewi \in SO(3)$.
In order to solve the problem,
this paper takes the gradient descent approach
which is a standard approach to coverage control.
For this purpose,
it is convenient to define an extension $\bar H_i: \R^{3 \times 3} \to \R_+$
such that $\bar H_i(M) = H_i$ if $M \in SO(3)$.
We first extend the domain
of $q_{wl}(\cdot)$ in (\ref{eqn:q_wl}) from $SO(3)$
to $\R^{3 \times 3}$ as
\begin{equation}
\bar q_{wl}(M) = 
\frac{\delta_i M p_{il}}{{\bf e}_3^T M p_{il}} + p_{wi}.
\label{eqn:arxiv2}
\end{equation}
Then, the vector $\bar q_{wl}(M) \in \R^3$ is not always 
on the environment when $M \notin SO(3)$ but
the function $f_i$ in (\ref{eqn:perf1}) is well-defined
even if the domain is altered from $\E$ to $\R^3$. 
We thus denote the function with the domain $\R^3$ by $\bar f_i$,
and define the composite function 
\begin{equation}
\bar{f}_i \circ \bar q_{wl}(M) = 
\frac{ \tilde \delta_i \|Mp_{il}\|_W^2}{\|{\bf e}_3^T M p_{il}\|^2}.
\label{eqn:perf5}
\end{equation}

\begin{figure}[t]
\begin{center}
%
%
\includegraphics[width=6cm]{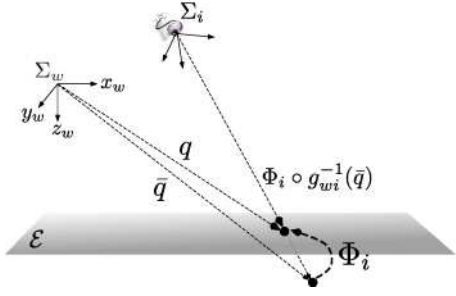}
\caption{Coordinates of $q \in \E$ relative to $\Sigma_w$ and $\Sigma_i$.}
\label{fig:proj4}
\end{center}
\end{figure}

We next focus on the term $\phi \circ q_{wl}(\ewi)$ in
(\ref{eqn:obj_single}) and expand the domain of the composite function
from $SO(3)$ to $\R^{3\times 3}$.
Here, since $\bar q_{wl}(M)$ is not always on $\E$,
we need to design $\bar \phi: \R^3 \to \R_+$ such that
$\bar \phi(q) = \phi(q)$ if $q \in \E$.
In this paper, we assign to a point $\bar q \in \R^3$ 
the density of a point 
\[
q = g_{wi}\circ \Phi \circ g_{wi}^{-1} (\bar q) 
= \frac{\delta_i(\bar q - p_{wi})}{{\bf e}_3^T(\bar q - p)} + p_{wi},
\]
where the operations are illustrated in Fig. \ref{fig:proj4}.
Accordingly, the density function is defined by
\begin{eqnarray}
\bar \phi(\bar q) = \phi\circ g_{wi}\circ \Phi \circ g_{wi}^{-1} (\bar q).
\label{eqn:perf6}
\end{eqnarray}
Remark that, differently from $f_i$, the function $\phi$
is not naturally extended and the selection of $\bar \phi$ is not unique. 
The motivation to choose (\ref{eqn:perf6}) will be clear in the 
next subsection.

Consequently, we define the extended objective function
\begin{eqnarray}
\bar H_i(M) \!\!&\!\!=\!\!&\!\! \sum_{l \in \L_i} w_{il}(\bar f_i\circ \bar q_{wl}(M)) (\bar \phi \circ \bar q_{wl}(M)),
\label{eqn:obj_single_fict}
\end{eqnarray}
from $\R^{3\times 3}$ to $\R_+$ by using
(\ref{eqn:perf5}) and (\ref{eqn:perf6}).
Let us finally emphasize that $H_i(M) = \bar H_i(M)$ holds for any $M\in SO(3)$.

\subsection{Density Estimation for Moving Objects Monitoring}

In the gradient descent approach,
we update the rotation $\ewi$
in the direction of $\grad_{\ewi[k]}^{SO(3)} H_i$
at each time $k$.
This subsection assumes that
the density $\phi$ is not given {\it a priori}
and that $\phi$ needs to be estimated from acquired vision data
as investigated in \cite{EYE,IJRR}.

Let us first consider an ideal situation such that the density function
is exactly projected onto the image plane, namely
\begin{equation}
\phi(q) = \psi \circ \Psi_i \circ g_{wi}^{-1}[k](q)\ {\forall q} \in FOV_i(\ewi[k]),
\label{eqn:psi_phi2}
\end{equation}
holds with respect to the density $\psi: \F_i \to \R_+$ over the image plane.
Then, the density function value $\phi(q)$ is available at
any point in the FOV.
We next consider a point $\bar q \in \R^3$
which does not always lie on $\E$. 
Then, the value of $\bar \phi(\bar q)$ is also given by 
the same function as (\ref{eqn:psi_phi2}) since
\begin{eqnarray}
\bar \phi(\bar q) \!\!&\!\!=\!\!&\!\! \phi\circ g_{wi}[k]\circ \Phi \circ g_{wi}^{-1}[k] (\bar q)
\nonumber\\
\!\!&\!\!=\!\!&\!\! \psi \circ \Psi_i \circ g_{wi}^{-1}[k] \circ g_{wi}[k]\circ \Phi \circ g_{wi}^{-1}[k] (\bar q)
\nonumber\\
\!\!&\!\!=\!\!&\!\! \psi \circ \Psi_i \circ \Phi \circ g_{wi}^{-1}[k] (\bar q)
= \psi \circ \Psi_i \circ g_{wi}^{-1}[k] (\bar q).
\nonumber
\end{eqnarray}
Ensuring the equality is the reason for choosing 
(\ref{eqn:perf6}).

\begin{figure}
\begin{center}
\includegraphics[width=7.5cm]{./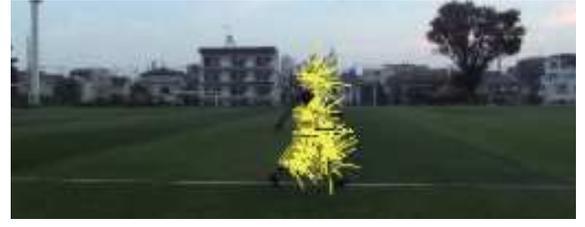}
\caption{A snapshot and computed optical flows.}
\label{fig:snapshot}
\end{center}
\end{figure}



We next consider estimation of the density $\psi$ on the image
since assuming (\ref{eqn:psi_phi2}) is unrealistic. 
Rich literature has been devoted to the
information extraction from the raw vision data,
and a variety of algorithms are currently available 
even without expert knowledge \cite{MATLAB1}.
For example, it is possible to detect and localize in the image plane
specific objects like cars or human faces, 
and even abstract targets such as everything moving 
or some environmental changes.

The present coverage scheme is indeed applicable to any scenario
such that a nonnegative number $y_{il}$ reflecting 
its own importance is assigned to each pixel $l \in \L_i$
after conducting some image processing.
However, we mainly focus on a specific scenario of
monitoring moving objects on the mission space.
Suppose that a sensor captures a human walking from 
left to right in the image as in Fig. \ref{fig:snapshot}.
Then, a way to localize such moving objects is to
compute optical flows from consecutive images as in Fig. \ref{fig:snapshot},
where the flows are depicted by yellow lines.
We also let the data $y_{il}$ be the norm of the flow vector
at each pixel.
Then, the plots of $y_{il}$ over the image plane 
are illustrated by green dots in  Fig. \ref{fig:mix_gauss1}.

\begin{figure}
\begin{center}
\begin{minipage}{4.2cm}
{\includegraphics[width=4.2cm]{./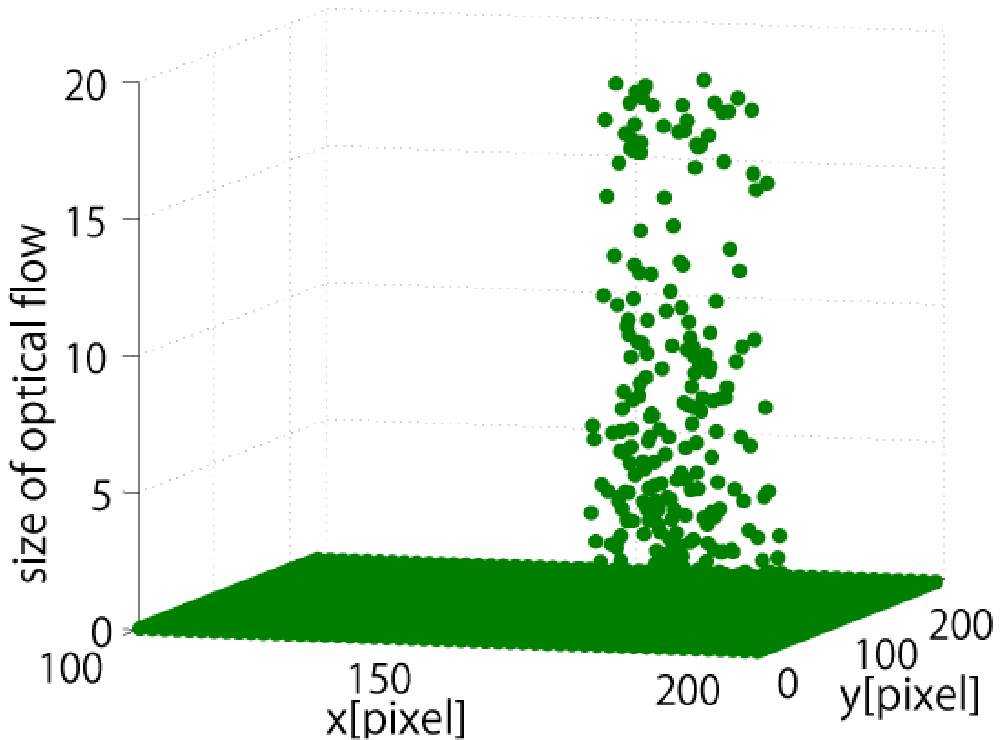}}
\caption{Plots of $y_{il}$.}
\label{fig:mix_gauss1}
\end{minipage}
\begin{minipage}{4.2cm}
{\includegraphics[width=4.2cm]{./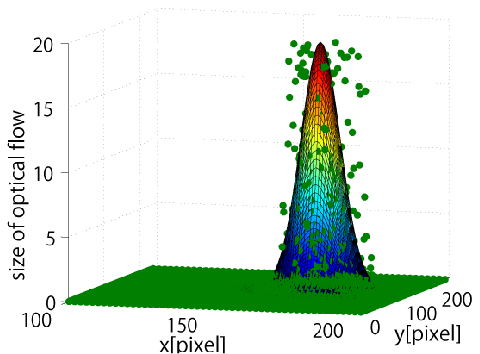}}
\caption{Estimated density.}
\label{fig:mix_gauss2}
\end{minipage}
\end{center}
\end{figure}


We next fit the data of $y_{il}$ by a continuous function
defined over $\F_i$ and use the function as $\psi$.
Such algorithms are also available
even in real time \cite{MATLAB2}.
Similarly to \cite{IJRR}, we employ the mixed Gaussian function
known to approximate a variety of functions with excellent precision
by increasing the number of Gaussian functions,
and widely used in data mining, pattern recognition,
machine learning and statistical analysis.
Fig. \ref{fig:mix_gauss2} shows the Gaussian function with $m=1$ computed so as to fit the data
in Fig. \ref{fig:mix_gauss1}.
Of course, using a larger $m$ achieves a better approximation
as shown in Fig. \ref{fig:mix_gauss_various}.

As a result, we obtain a function in the form of
\begin{equation}
\sum_{j=1}^m \alpha_j e^{-\|p^{\rm im} - \mu^{\rm im}_j\|^2_{\Sigma_j^{\rm im}}},\ \
\Sigma_j^{\rm im} > 0
\label{eqn:mix_Gauss_image2}
\end{equation} 
over the 2-D image plane coordinates $p^{\rm im} \in \R^2$.
Note that (\ref{eqn:mix_Gauss_image2}) is large when
$p^{\rm im}$ captures an important point, which is opposite to 
the density function.
Thus, we define the function
\begin{equation}
\psi^{\rm im}(p^{\rm im}) = \bar{\psi} -
\sum_{j=1}^m \alpha_j e^{-\|p^{\rm im} - \mu^{\rm im}_j\|^2_{\Sigma_j^{\rm im}}},
\label{eqn:mix_Gauss_image}
\end{equation} 
where $\bar{\psi} < \bar{\phi}$ is a positive scalar guaranteeing $\psi^{\rm im}(p^{\rm im}) \geq 0$
for all $p^{\rm im}$.
It is also convenient to
define $\psi(p)$ for all
3-D vectors $p\in \F_i$ on the image plane as
\begin{eqnarray}
\psi (p) \!\!&\!\!=\!\!&\!\!
\left\{
\begin{array}{l}
\bar{\phi},\ \mbox{if }g_{wi} \circ \Phi_i(p) \notin \Q \\
\bar{\psi} - \sum_{j=1}^m \alpha_j 
e^{-\|p - \mu_j\|^2_{\Sigma_j}},\ 
\mbox{otherwise}
\end{array}
\right.,
\label{eqn:mix_Gauss}\\
\mu_j \!\!&\!\!=\!\!&\!\! 
\begin{bmatrix}
\mu^{\rm im}_j\\
\lambda_i
\end{bmatrix},\ \
\Sigma_j =
\begin{bmatrix}
\Sigma_j^{\rm im}&0\\
0&0
\end{bmatrix}.
\end{eqnarray}

\begin{figure}
\begin{center}
\begin{minipage}{4cm}
{\includegraphics[width=4cm,height=2.8cm]{./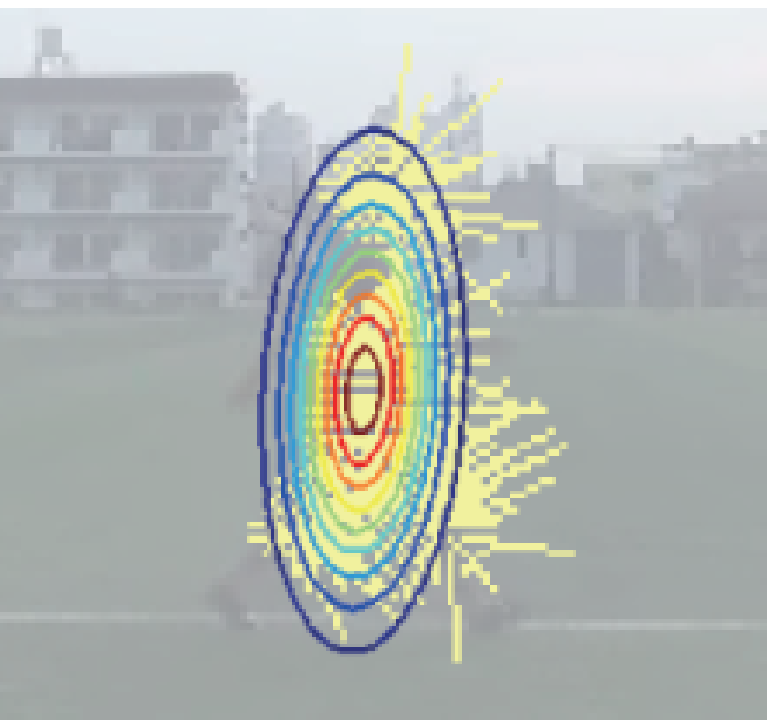}}
\end{minipage}
\hspace{0.2cm}
\begin{minipage}{4cm}
{\includegraphics[width=4cm,height=2.8cm]{./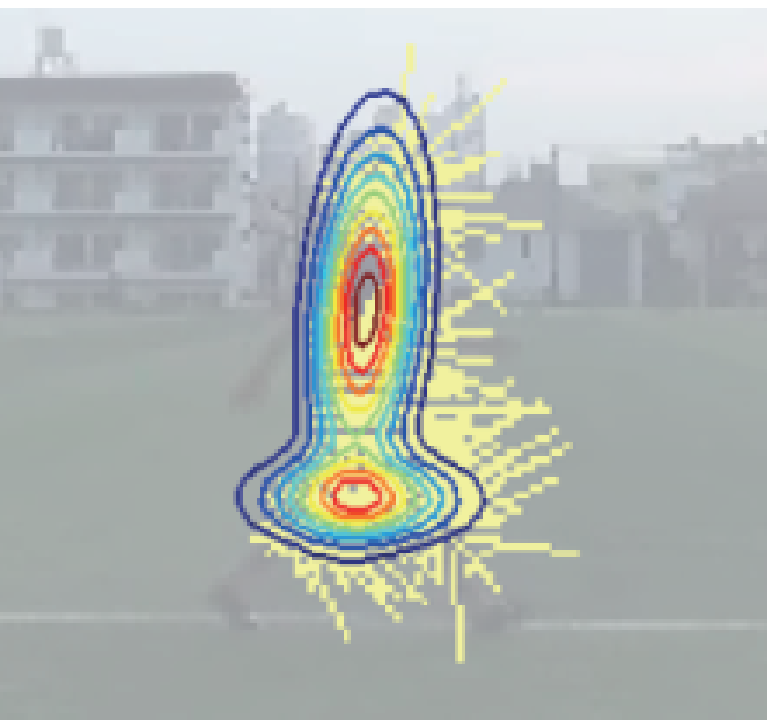}}
\end{minipage}
\caption{Estimated densities for $m = 1$(left) and $m = 2$(right).}
\label{fig:mix_gauss_various}
\end{center}
\end{figure}

\subsection{Gradient Computation}

\subsubsection*{Full 3-D Rotational Motion}

Here, we will derive the gradient 
$\grad_{\ewi[k]}^{SO(3)} H_i$, given a rotation $\ewi[k] \in SO(3)$ and $\psi$ in (\ref{eqn:mix_Gauss}).
It is widely known that $\TSO$ is formulated as
$\TSO = \{\ewi X \in \R^{3\times 3}|\ X \in so(3)\}$,
where $so(3)$ is the set of all skew symmetric matrices in $\R^{3\times 3}$.
We also define the operator $\wedge$(wedge) from $\R^3$ to $\R^{3\times 3}$
such that $a \times b = \hat a b$ for the cross product $\times$.
%
%
%
The rotational group $SO(3)$ is known to be a submanifold of
a Riemannian manifold $\R^{3 \times 3}$ with 
$T_x\R^{3\times 3} = \R^{3\times 3} \supset \TSO$
and the Riemannian metric 
\begin{eqnarray}
\langle M, N \rangle = \tr(M^TN),\ M, N\in \R^{3\times 3}
\label{eqn:Riemannian}
\end{eqnarray}
\cite{AMS_BK}.
It is also known that
the orthogonal projection $P_{\ewi}$ of matrix
$M \in T_{\ewi}\R^{3\times 3} = \R^{3\times 3}$ onto $\TSO$ 
in terms of the Riemannian metric induced by (\ref{eqn:Riemannian})
is given by
\begin{eqnarray}
P_{\ewi}(M) = \ewi\sk(\ewi^T M),\ \ \sk(M) = \frac{1}{2}(M - M^T).
\label{eqn:proj_so(3)}
\end{eqnarray}
See Subsection 3.6.1 of \cite{AMS_BK} for more details.

Now, we have the following theorem, where we use the notation
$\tilde \L_i(\ewi) = \{l \in \L_i|\ g_{wi} \circ \Phi(p_{il};\ewi) \in \Q\}$
and $\tilde \L_i^c(\ewi) = \L_i \setminus \tilde \L_i(\ewi)$.

\begin{theorem}
Suppose that the objective function $\bar H_i$ is formulated by (\ref{eqn:obj_single_fict})
with (\ref{eqn:perf5}), (\ref{eqn:perf6}) and (\ref{eqn:mix_Gauss}).
Then, the gradient $\grad_{\ewi[k]}^{SO(3)} H_i$ 
is given by 
\begin{eqnarray}
\grad_{\ewi[k]}^{SO(3)} H_i \!\!&\!\!=\!\!&\!\! 
P_{\ewi[k]}\left(\grad_{\ewi[k]}^{\R^{3\times 3}} \bar H_i\right),
\label{eqn:proj2}\\
\grad_{\ewi[k]}^{\R^{3\times 3}} \bar H_i \!\!&\!\!=\!\!&\!\!
\tilde \delta_i 
\eta_i^T(\ewi[k]) p_{il}^T,
 \nonumber
\end{eqnarray}
where
\begin{eqnarray}
\eta_i(R) \!\!&\!\!=\!\!&\!\!\!\!
\sum_{l \in \tilde \L_i^c(R)}w_{il}
\bar{\phi}  \eta_i^l+\!\!
\sum_{l \in \tilde \L_i(R)}\!\! w_{il}
\Big(
\bar{\psi} \eta_i^l
- \sum_{j=1}^m \alpha_j \eta_{i}^{lj}
\Big)
\nonumber\\
\tilde \L_i(R) \!\!&\!\!=\!\!&\!\! 
\{l \in \L_i|\ q_{wl}(R) \in \Q\},\ 
\tilde \L_i^c(R)  =  \L_i\setminus \tilde \L_i(R)
\nonumber\\
\eta_i^l(R) \!\!&\!\!=\!\!&\!\! \frac{2}{({\bf e}_3^T R p_{il})^3}
\Big(
({\bf e}_3^T R p_{il})p_{il}^TR^TW
- \|Rp_{il}\|_W^2 {\bf e}_3^T
\Big)
\nonumber\\
\eta_{i}^{lj}(R) \!\!&\!\!=\!\!&\!\!
\frac{2e^{-\|b_{lj}\|^2_{\Sigma_j}}}
{\lambda_i({\bf e}_3^TR p_{il})^3}
\Big(({\bf e}_3^TR p_{il})\xi^{lj}_i(R)
-\lambda_i\|Rp_{il}\|_W^2{\bf e}_3^T
\Big)
\nonumber\\
\xi^{lj}_i(R) \!\!&\!\!=\!\!&\!\!
 \|Rp_{il}\|_W^2b_{lj}^T \Sigma_j (p_{il} {\bf e}_3^T - \lambda_i I_3)R^T
+ \lambda_i p_{il}^TR^TW
\nonumber
\end{eqnarray}
\end{theorem}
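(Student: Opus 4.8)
The plan is to invoke Lemma~\ref{lem:grad_proj} so that the only real work is computing the Euclidean gradient $\grad_{\ewi[k]}^{\R^{3\times 3}}\bar H_i$ and then projecting. Since $\bar H_i(M)=\sum_{l\in\L_i}w_{il}(\bar f_i\circ\bar q_{wl}(M))(\bar\phi\circ\bar q_{wl}(M))$ and, by the identity established just before the theorem, $\bar\phi\circ\bar q_{wl}(M)=\psi\circ\Psi_i\circ g_{wi}^{-1}[k](\bar q_{wl}(M))$, one checks that this composite collapses to $\psi$ evaluated at the image-plane point $p_{il}$ scaled appropriately — concretely, $\bar\phi\circ\bar q_{wl}(M)$ reduces to $\psi$ applied to a ray through $p_{il}$, whose value is $\bar\phi$ when $q_{wl}(R)\notin\Q$ and $\bar\psi-\sum_j\alpha_j e^{-\|b_{lj}\|^2_{\Sigma_j}}$ otherwise, with $b_{lj}$ the appropriate affine function of $Rp_{il}$ coming from the $\mu_j,\Sigma_j$ in (\ref{eqn:mix_Gauss}). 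This splits $\L_i$ into $\tilde\L_i(R)$ and $\tilde\L_i^c(R)$ exactly as in the statement, and linearity of $D\bar H_i(M)[\cdot]$ over the sum lets me treat each pixel term separately.

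Next I would compute, for a fixed $l$, the directional derivative $D(\bar f_i\circ\bar q_{wl})(M)[\Xi]$ where $\bar f_i\circ\bar q_{wl}(M)=\tilde\delta_i\|Mp_{il}\|_W^2/({\bf e}_3^TMp_{il})^2$ from (\ref{eqn:perf5}). Using the quotient rule on $M\mapsto Mp_{il}$ (whose derivative in direction $\Xi$ is $\Xi p_{il}$) gives a term proportional to $({\bf e}_3^TMp_{il})p_{il}^TM^TW - \|Mp_{il}\|_W^2{\bf e}_3^T$, acting on $\Xi p_{il}$, divided by $({\bf e}_3^TMp_{il})^3$; matching this against $\langle G,\Xi\rangle=\tr(G^T\Xi)$ and using $\tr(v^Tu p_{il}^T{}^T\Xi)$-type rearrangements identifies the Euclidean gradient of that factor as $\eta_i^l(M)^T p_{il}^T$ up to the $\tilde\delta_i$ constant. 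The same mechanical differentiation applied to the Gaussian factors $e^{-\|b_{lj}\|^2_{\Sigma_j}}$ — where the chain rule brings down a $-2b_{lj}^T\Sigma_j\,D b_{lj}$, and $b_{lj}$ is affine in $Mp_{il}/({\bf e}_3^TMp_{il})$ — produces the $\xi_i^{lj}$ and $\eta_i^{lj}$ expressions; the product rule combining the $\bar f_i$ factor with the $\psi$ factor accounts for the two-summand structure inside $\eta_i^l$ versus $\eta_i^{lj}$. Summing over $l$ and over $j$, and pulling out $\tilde\delta_i$, yields $\grad_{\ewi[k]}^{\R^{3\times 3}}\bar H_i=\tilde\delta_i\,\eta_i^T(\ewi[k])\,p_{il}^T$ (with the understanding that the $p_{il}$-dependence is already absorbed into the sum defining $\eta_i$). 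Finally, Lemma~\ref{lem:grad_proj} together with the projection formula (\ref{eqn:proj_so(3)}) gives (\ref{eqn:proj2}). One should also note that at pixels where $\Q$-membership of $q_{wl}$ changes, $\bar H_i$ has a kink, but the theorem implicitly evaluates the gradient at a point where $\ewi[k]$ is such that all pixels are in the interior of their respective regions, so differentiability holds there.

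The main obstacle is purely bookkeeping: carefully tracking the quotient- and chain-rule terms for the rational function $\|Mp_{il}\|_W^2/({\bf e}_3^TMp_{il})^2$ and for the Gaussian exponents, and then correctly converting each directional-derivative expression of the form ``row vector times $\Xi p_{il}$'' into a matrix gradient via the trace inner product — i.e. recognizing that $\tr(a^T\Xi p_{il}) = \tr((a p_{il}^T)^T\Xi)$ so the gradient contribution is $a\,p_{il}^T$. There is no conceptual difficulty beyond Lemma~\ref{lem:grad_proj}; the risk is sign errors and mismatched transposes in the $\xi_i^{lj}$ term, which couples $W$, $\Sigma_j$, $b_{lj}$, the factor $(p_{il}{\bf e}_3^T-\lambda_i I_3)$ coming from differentiating the normalized ray, and $R^T$. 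I would verify the final formula by a dimensional/consistency check (each term in $\eta_i$ is a $1\times 3$ row vector, so $\eta_i^T p_{il}^T$ is $3\times 3$ as required) and, if needed, by specializing to $W=I_3$, $m=0$ to recover a simpler known gradient.
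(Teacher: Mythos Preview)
Your proposal is correct and follows essentially the same route as the paper: reduce to the Euclidean gradient via Lemma~\ref{lem:grad_proj}, split $\bar H_i$ into the $H_i^l$ and $H_i^{lj}$ pieces according to $\tilde\L_i$ versus $\tilde\L_i^c$, compute each directional derivative (the paper does this by explicit first-order expansion of $H(R+\Xi t)-H(R)$ rather than invoking the quotient/chain rule by name, but it is the same computation), rewrite the result as $\tr\big(\Xi^T(\tilde\delta_i\eta_i^T p_{il}^T)\big)$ to read off $\grad_R^{\R^{3\times 3}}\bar H_i$, and project via (\ref{eqn:proj_so(3)}). One minor correction: in the paper $b_{lj}=p_{il}-\mu_j$ is a \emph{constant} (since ${\bf e}_3^Tp_{il}=\lambda_i$), and the $R$-dependence of the Gaussian exponent enters through the argument $\lambda_i R^TMp_{il}/({\bf e}_3^TR^TMp_{il})$, not through $b_{lj}$ itself; keeping that straight is exactly the bookkeeping you flagged.
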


\begin{proof}
See Appendix \ref{app:1}.
\end{proof}

Namely, just running the dynamics 
\begin{equation}
\dot{R}_{wi} = - K \grad_{\ewi}^{SO(3)}  H_i,\ K > 0
\label{eqn:grad_descent1}
\end{equation}
leads $\ewi$ to the set of critical points of $H_i$.
However, in practice, 
the vision data is usually obtained at discrete time instants
and hence 
we approximate the continuous-time algorithm (\ref{eqn:grad_descent1}) by
\begin{eqnarray}
{R}_{wi}[k+1] = \ewi[k]{\rm exp}\left(\ewi^T[k] \left(\alpha_k \grad_{\ewi[k]}^{SO(3)} H_i\right)\right).
\label{eqn:grad_descent2}
\end{eqnarray}
See \cite{AMS_BK} for the details on the selection of $\alpha_k$.

\subsubsection*{Rotational Motion with Underactuations}

\begin{figure}[t]
\begin{center}
\begin{minipage}{4cm}
\begin{center}
\includegraphics[width=4cm]{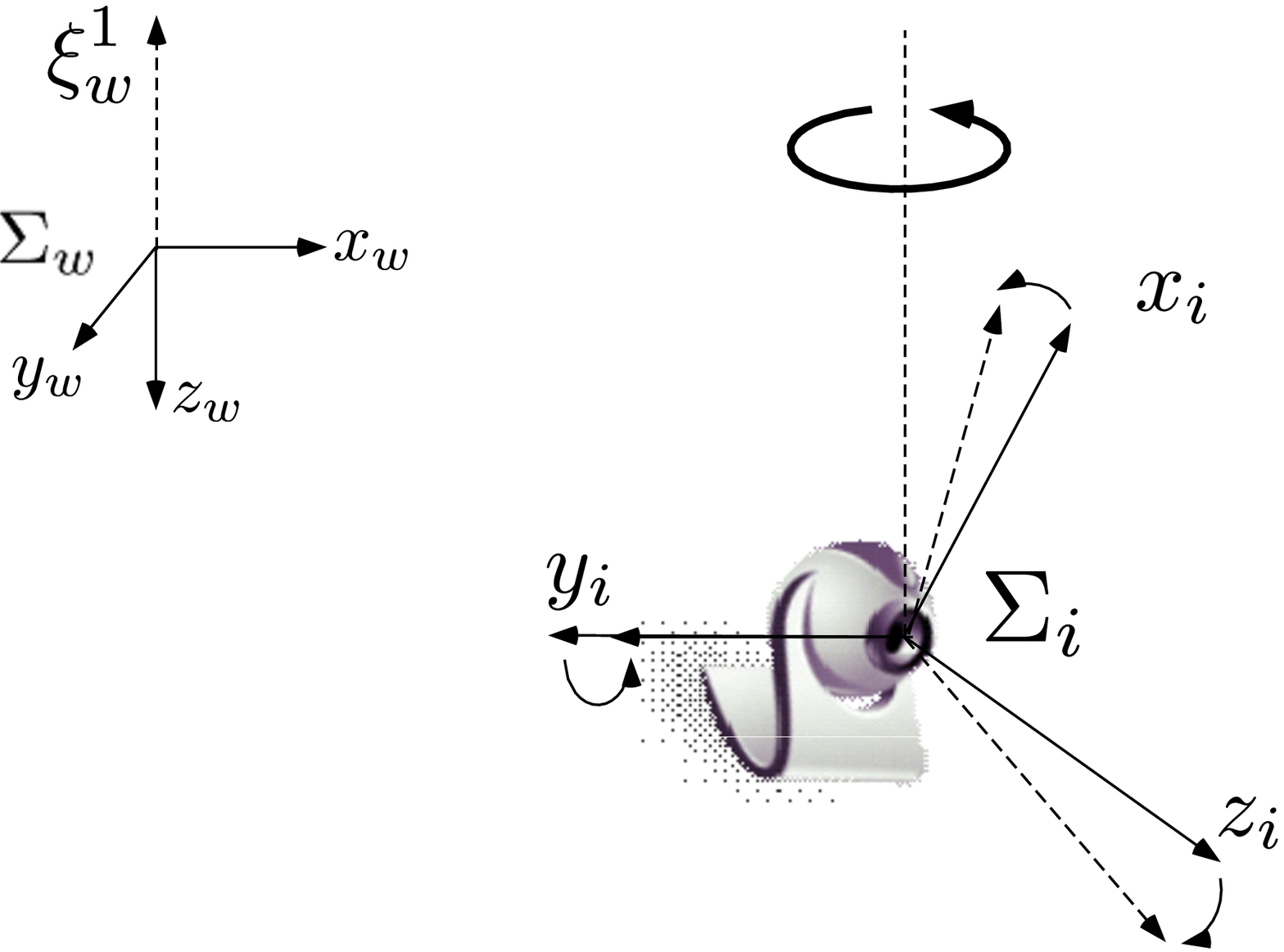}
\caption{Pan motion.}
\label{fig:Pan}
\end{center}
\end{minipage}
\hspace{.2cm}
\begin{minipage}{4cm}
\begin{center}
\includegraphics[width=4cm]{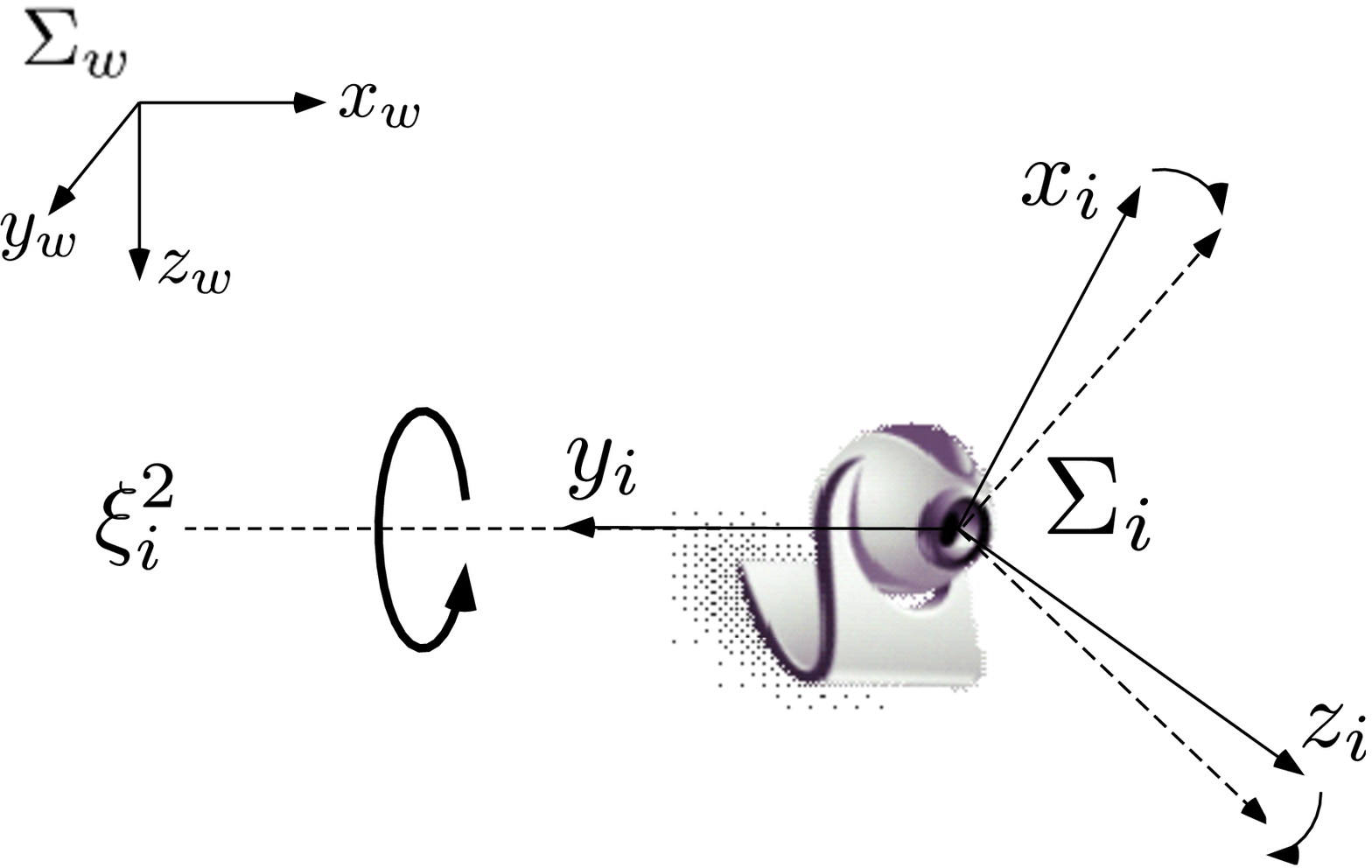}
\caption{Tilt motion.}
\label{fig:Tilt}
\end{center}
\end{minipage}
\end{center}
\end{figure}

In the above discussion, we assume that the sensor can take full 3-D
rotational motion.
However, the motion of many commoditized cameras is restricted by
the actuator configurations.
Hereafter, we suppose that the sensor can be rotated 
around two axes $\xi^1_i$ ($\|\xi^1_i\|=1$) and $\xi_i^2$ ($\|\xi^2_i\|=1$),
where these vectors are defined in $\Sigma_i$ and assumed to be
linearly independent of each other.
These axes may depend on the rotation matrix $\ewi$.
For example, in the case of
Pan-Tilt (PT) cameras in Figs. \ref{fig:Pan} and \ref{fig:Tilt}, which are
typical commoditized cameras, 
the axis of the pan motion (Fig. \ref{fig:Pan}) 
is fixed relative to $\Sigma_w$, 
while that of the tilt motion (Fig. \ref{fig:Tilt}) is fixed
relative to the sensor frame $\Sigma_i$.
Then, only one of the two axes depends on $\ewi$.
Note that even when there is only one axis around which the sensor can be rotated, 
the subsequent discussions are valid just letting $\xi_i^2=0$.


Let us denote a normalized vector $\xi^3_i$ ($\|\xi_1^3\|=1$)
orthogonal to the $\xi^1_i, \xi^2_i$-plane.
Then, 
the three vectors $\xi^1_i$, $\xi^2_i$ and 
$\xi_i^3$ span $\R^3$.
Thus, any element $\Theta$ of $\TSO$ can be represented in the form of
$\Theta = \ewi \sum_{j=1}^3 \beta_j \hat{\xi}_i^j,\ \ \beta_j \in \R$.
Now, we define a {\it distribution} $\Delta$ \cite{MLS_BK} assigning
$\ewi \in SO(3)$ to the subspace 
\begin{equation}
\left\{\Theta\in \TSO\left|\ \Theta = \ewi \sum_{j=1}^2 \beta_j \hat{\xi}_i^j,\ \beta_1, \beta_2 \in \R
\right.
\right\},
\label{eqn:submanifold}
\end{equation}
whose dimension is $2$.
The distribution $\Delta$ is clearly regular and hence 
induces a submanifold ${\mathcal S}_{UA}$ of $SO(3)$ \cite{MLS_BK},
called integral manifold, such that its tangent space 
$T_{\ewi}{\mathcal S}_{UA}$ at $\ewi\in {\mathcal S}_{UA}
\subseteq SO(3)$ is equal to (\ref{eqn:submanifold}).
The manifold ${\mathcal S}_{UA}$ specifies 
orientations which the camera can take.

Since ${\mathcal S}_{UA}$ is a submanifold of $SO(3)$,
a strategy similar to Theorem 1 is available and
we have the following corollary.

\begin{corollary}
Suppose that the objective function $\bar H_i$ is formulated by (\ref{eqn:obj_single_fict})
with (\ref{eqn:perf5}), (\ref{eqn:perf6}) and (\ref{eqn:mix_Gauss}).
Then, the gradient $\grad_{\ewi[k]}^{{\mathcal S}_{UA}} H_i$ 
is given by 
\begin{eqnarray}
\grad_{\ewi[k]}^{{\mathcal S}_{UA}} H_i \!\!&\!\!=\!\!&\!\! 
P^{UA}_{\ewi[k]}\left(\grad_{\ewi[k]}^{SO(3)} H_i\right)
\label{eqn:proj3}
\end{eqnarray}
where the orthogonal projection $P^{UA}_{\ewi}(M)$ of
$M = \ewi N\in \TSO,\ N \in so(3)$ to $T_{\ewi}{\mathcal S}_{UA}$ is defined by
\begin{eqnarray}
P^{UA}_{\ewi}(M) \!\!&\!\!=\!\!&\!\! \ewi (
\alpha_1 \hat{\xi}_i^1 
+ \alpha_2 \hat{\xi}_i^2)
\label{eqn:proj4}
\end{eqnarray}
with $\alpha_l =
\frac{\langle N, \hat{\xi}_i^l\rangle - 
\langle \hat{\xi}_i^1, \hat{\xi}_i^2\rangle
\langle N, \hat{\xi}_i^{-l}\rangle}
{1 - \langle \hat{\xi}_i^1, \hat{\xi}_i^2\rangle^2}$
$l = 1, 2$, where $-l = 2$ if $l=1$ and
$-l = 1$ if $l=2$.
\end{corollary}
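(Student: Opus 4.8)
The plan is to apply Lemma \ref{lem:grad_proj} once more, now with $\N = SO(3)$ and $\M = {\mathcal S}_{UA}$, just as Theorem 1 applied it with $\N = \R^{3\times 3}$ and $\M = SO(3)$. Since $H_i$ restricted to ${\mathcal S}_{UA}$ is a restriction of $H_i$ on $SO(3)$, the lemma immediately gives $\grad_{\ewi[k]}^{{\mathcal S}_{UA}} H_i = P^{UA}_{\ewi[k]}(\grad_{\ewi[k]}^{SO(3)} H_i)$, which is (\ref{eqn:proj3}). So the content of the corollary is really the explicit formula (\ref{eqn:proj4}) for the orthogonal projection $P^{UA}_{\ewi}$ onto the two-dimensional subspace (\ref{eqn:submanifold}) of $\TSO$. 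The main task is therefore a linear-algebra computation of an orthogonal projection onto a two-dimensional subspace spanned by a non-orthogonal basis.

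First I would record that ${\mathcal S}_{UA}$ is a Riemannian submanifold of $SO(3)$: its tangent space $T_{\ewi}{\mathcal S}_{UA}$ equals (\ref{eqn:submanifold}) by the construction of the integral manifold, and it inherits the metric (\ref{eqn:Riemannian}), so Lemma \ref{lem:grad_proj} applies verbatim. Next I would compute $P^{UA}_{\ewi}$. Given $M = \ewi N$ with $N \in so(3)$, we seek $\alpha_1, \alpha_2$ such that $\ewi(\alpha_1\hat\xi_i^1 + \alpha_2\hat\xi_i^2)$ is the orthogonal projection, i.e. such that $M - \ewi(\alpha_1\hat\xi_i^1 + \alpha_2\hat\xi_i^2)$ is orthogonal to both $\ewi\hat\xi_i^1$ and $\ewi\hat\xi_i^2$ in the metric (\ref{eqn:Riemannian}). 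Using $\langle \ewi A, \ewi B\rangle = \tr(A^T\ewi^T\ewi B) = \tr(A^TB) = \langle A, B\rangle$, these two conditions become the $2\times 2$ normal equations
\begin{eqnarray}
\langle N, \hat\xi_i^1\rangle \!\!&\!\!=\!\!&\!\! \alpha_1 + \alpha_2\langle\hat\xi_i^1,\hat\xi_i^2\rangle, \nonumber\\
\langle N, \hat\xi_i^2\rangle \!\!&\!\!=\!\!&\!\! \alpha_1\langle\hat\xi_i^1,\hat\xi_i^2\rangle + \alpha_2, \nonumber
\end{eqnarray}
where I used $\langle\hat\xi_i^l,\hat\xi_i^l\rangle = 2\|\xi_i^l\|^2 = 2$ and divided through; one should double-check the normalization constant (since $\langle\hat a,\hat b\rangle = 2a^Tb$) but it cancels uniformly. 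Solving this system by Cramer's rule gives exactly $\alpha_l = \frac{\langle N,\hat\xi_i^l\rangle - \langle\hat\xi_i^1,\hat\xi_i^2\rangle\langle N,\hat\xi_i^{-l}\rangle}{1 - \langle\hat\xi_i^1,\hat\xi_i^2\rangle^2}$, which is (\ref{eqn:proj4}). The denominator is nonzero precisely because $\xi_i^1$ and $\xi_i^2$ are assumed linearly independent, so by Cauchy--Schwarz $|\langle\hat\xi_i^1,\hat\xi_i^2\rangle| < \langle\hat\xi_i^1,\hat\xi_i^1\rangle^{1/2}\langle\hat\xi_i^2,\hat\xi_i^2\rangle^{1/2}$, i.e. the normalized inner product has magnitude strictly less than $1$; this also shows the Gram matrix is positive definite, so the projection is well-defined. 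Finally I would note that since $\grad_{\ewi[k]}^{SO(3)} H_i = \ewi[k]\,\sk(\ewi[k]^T\grad^{\R^{3\times3}}_{\ewi[k]}\bar H_i)$ from Theorem 1 already has the form $\ewi[k] N$ with $N = \sk(\ewi[k]^T\grad^{\R^{3\times3}}_{\ewi[k]}\bar H_i) \in so(3)$, formula (\ref{eqn:proj4}) applies directly to it, completing the proof.

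The only mild obstacle I anticipate is bookkeeping with the factor-of-two normalization in $\langle\hat a,\hat b\rangle = 2a^Tb$ versus the unit-norm assumptions $\|\xi_i^l\| = 1$: one must verify that the stated denominator $1 - \langle\hat\xi_i^1,\hat\xi_i^2\rangle^2$ and numerator are written consistently with whatever normalization the paper intends in the inner products appearing in $\alpha_l$. If the $\langle\cdot,\cdot\rangle$ in the corollary statement already denotes the (unnormalized) trace metric (\ref{eqn:Riemannian}), then $\langle\hat\xi_i^l,\hat\xi_i^l\rangle = 2$ and the normal-equation matrix is $\begin{bmatrix}2 & \langle\hat\xi_i^1,\hat\xi_i^2\rangle\\ \langle\hat\xi_i^1,\hat\xi_i^2\rangle & 2\end{bmatrix}$, whose inverse produces a slightly different prefactor; I would reconcile this by observing that the formula as printed is exactly what one gets after normalizing each $\hat\xi_i^l$ to unit length in the trace metric, which is legitimate because rescaling the spanning vectors does not change the subspace or its orthogonal projection. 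Everything else is a routine Cramer's-rule computation requiring no further ideas.
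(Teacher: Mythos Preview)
Your approach is exactly the one the paper intends: the corollary is stated without proof precisely because it follows by a second application of Lemma \ref{lem:grad_proj} (now with $\N = SO(3)$, $\M = {\mathcal S}_{UA}$) together with the elementary computation of the orthogonal projection onto the two-dimensional subspace (\ref{eqn:submanifold}). Your derivation of (\ref{eqn:proj3}) and the normal-equation computation leading to (\ref{eqn:proj4}) are both correct, and your caveat about the factor-of-two normalization in $\langle\hat a,\hat b\rangle = 2a^Tb$ versus the printed denominator $1-\langle\hat\xi_i^1,\hat\xi_i^2\rangle^2$ is a legitimate bookkeeping point that the paper does not address explicitly.
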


We see from this corollary  
that the gradient
$\grad_{\ewi[k]}^{SO(3)} H_i$ on $SO(3)$ 
is utilized as it is and we need only to project it
through (\ref{eqn:proj4}).
Also, the projection (\ref{eqn:proj4}) is successfully 
parameterized by the vectors $\xi_i^1$ and $\xi_i^2$.


\section{Coverage for Multiple Sensors}

\begin{figure}[t]
\begin{center}
\includegraphics[width=7cm]{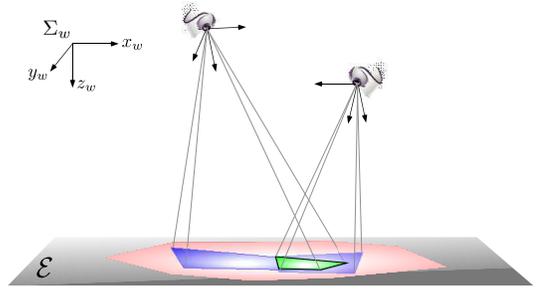}
\caption{Overlaps of FOVs (green region).}
\label{fig:overlap}
\end{center}
\end{figure}

In this section, we extend the result of the previous section to the
multi-sensor case.
The difference from the single sensor case stems from the overlaps
of the FOVs with the other sensors as illustrated in Fig. \ref{fig:overlap}.
\cite{EYE,CL_EJC05,BCM_ES05} present sensing performance functions
taking account of the overlaps and their gradient decent laws.
However, in this paper, we present another simpler scheme to manage the overlap.

Let us first define the set of sensors capturing a point $q \in \E$ within the FOV as
${\mathcal V}(q;R_{\V}) = \{i\in \V|\ q \in FOV_i(\ewi)\}$
where $R_{\V} = (\ewi)_{i\in \V}$.
We also suppose that, when $\V(q;R_{\V})$ has multiple elements for some $q \in Q$,
only the data of the sensor with the minimal 
sensing performance (\ref{eqn:perf1})
among sensors in $\V(q;R_{\V})$ is employed in higher-level decisions and recognitions.
This motivates us to partition $FOV_i(\ewi)$ into the two region 
\begin{eqnarray}
SFOV_i(R_{\V}) \!\!&\!\!=\!\!&\!\! \{q \in FOV_i(\ewi)|\ i \in \arg\min_{j \in \V(q;R_{\V})} f_j(q)\},
\nonumber
\label{eqn:SFOV}\\
SFOV^c_i(R_{\V}) \!\!&\!\!=\!\!&\!\! \{q \in FOV_i(\ewi)|\ i \notin \arg\min_{j \in \V(q;R_{\V})} f_j(q)\}.
\nonumber
\label{eqn:SFOVc}
\end{eqnarray}
Then, what pixel $l$ captures a point in $SFOV^c_i(R_{\V})$
is identified with what it captures a point outside of $\Q$,
whose cost is set as $\bar \phi (f_i\circ q_w)$ in the previous section,
in the sense that both of the data are not useful at all. 
This is reflected by assigning $\bar{\phi}$ to the pixels $l \notin \tilde \L_i(R_{\V})$
with
$\tilde \L_i(R_{\V})= \{l\in \L_i|\ q_{wl}(\ewi) \in SFOV_i(R_{\V}) \cap \Q\}$.

Accordingly, we formulate the  function
to be minimized by $\V$ as $H(R_{\V}) = 
\sum_{i\in \V} H_i(R_{\V})$ with
\begin{eqnarray}
H_i(R_{\V}) \!\!&\!\!=\!\!&\!\! 
\sum_{l \in \tilde \L_i(R_{\V})} w_{il} (f_i\circ q_{wl}(\ewi)) (\phi \circ q_{wl}(\ewi))
\nonumber\\
&&\hspace{1cm} + \bar{\phi} \sum_{l \notin \tilde \L_i(R_{\V})} w_{il} f_i\circ q_{wl}(\ewi).
\label{eqn:obj_multi}
\end{eqnarray}
Remark that (\ref{eqn:obj_multi}) differs from
(\ref{eqn:obj_single}) only in the set $\tilde \L_i(R_{\V})$.

Strictly speaking, to compute the gradient of (\ref{eqn:obj_multi}),
we need to expand $\L_i(R_{\V})$ from $SO(3)\times \cdots \times SO(3)$ to 
$\R^{3\times 3}\times \cdots \times \R^{3\times 3}$. 
For this purpose, it is sufficient to define 
$\bar{FOV}_i(M)$ from $\R^{3\times 3}$ to a subset of $\E$.
For example, an option is to define 
an extension
\begin{equation}
\bar q^v_{wl}(M) = 
\frac{\delta_i M p^v_{il}}{{\bf e}_3^T M p^v_{il}} + p_{wi},\ l=1,2,3,4.
\label{eqn:arxiv3}
\end{equation}
of (\ref{eqn:arxiv1}) similarly to (\ref{eqn:arxiv2}), and
to let $\bar{FOV}_i(M)$ be the convex full of these points.
However, at the time instants computing the gradient with $R_{\V}[k]$,
the extension $\bar \L_i(M_{\V})$ for a sufficiently small
perturbation $M_{\V}-R_{\V}$ is equivalent to
the original set $\tilde \L_i(R_{\V})$
irrespective of the selection of $\bar{FOV}_i(M)$
except for the pathological case when a pixel is located on the boundary
of $SFOV_i(R_{\V})$.
Namely, ignoring such pathological cases which do not happen
almost surely for (\ref{eqn:grad_descent2}),
the gradient can be computed by using 
the set $\tilde \L_i(R_{\V}[k])$ instead of its extension.
Hence, the gradient is simply given as Theorem 1
by just replacing $\tilde \L_i(\ewi)$ by $\tilde \L_i(R_{\V})$.
Note that the curve fitting process is run without 
taking account of whether $l\in \tilde \L_i(\ewi)$ or not, 
and $\bar{\phi}$ is assigned
to $l\notin \tilde \L_i(\ewi)$ 
at the formulation of $\psi$ as in (\ref{eqn:mix_Gauss}).
This is because letting
$y_{il} = 0\ {\forall l} \notin \tilde \L_i(\ewi)$
at the curve fitting stage would degrade the density estimation
accuracy at around the boundary of $SFOV_i$.


%

The remaining issue is efficient computation of the set $\tilde \L_i(\ewi)$.
Hereafter, we assume that each sensor acquires $FOV_j\ {\forall j}\in \V\setminus \{i\}$,
i.e. $\bar A_i^l(R_{wj})$ and $\bar a_i^l(R_{wj})$ for all $l=1,2,3,4$,
and its index $j$
through (all-to-all) communication or with the help of a centralized computer.
The computation under the limited communication will be
mentioned at the end of this section.
In addition, we suppose that every sensor stores the set
\begin{equation}
\Q_{ij} = \left\{q \in \Q\left|\ \lambda_j\left\|q - 
p_{wi}
\right\|_W^2
>
\lambda_i \left\|q - 
p_{wj}
\right\|_W^2
\right.\right\},
\label{eqn:3Dvoronoi}
\end{equation}
for all $j\in \V$ which can be computed off-line
since the sensor positions are fixed.

\begin{figure}[t]
\begin{center}
\begin{minipage}{4cm}
\begin{center}
\includegraphics[width=4cm,height=2.5cm]{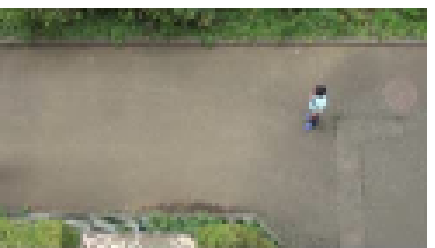}
\end{center}
\end{minipage}
\hspace{.2cm}
\begin{minipage}{4cm}
\begin{center}
\includegraphics[width=4cm,height=2.5cm]{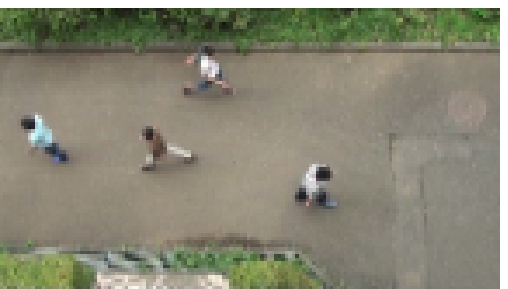}
\end{center}
\end{minipage}
\caption{Images at $t=0$s (left) and $t = 4$s (right).}
\label{fig:initial_image}
\bigskip

\begin{minipage}{4cm}
\begin{center}
\includegraphics[width=4cm,height=2.5cm]{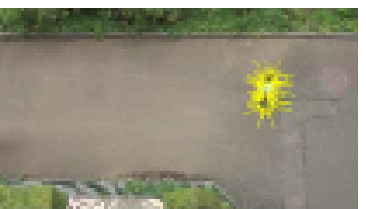}
\end{center}
\end{minipage}
\hspace{.2cm}
\begin{minipage}{4cm}
\begin{center}
\includegraphics[width=4cm,height=2.5cm]{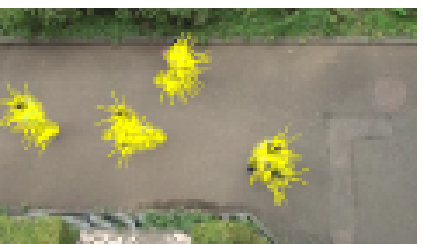}
\end{center}
\end{minipage}
\caption{Optical flows of images at $t=0$s (left) and $t = 4$s (right).}
\label{fig:of_image}
\bigskip

\begin{minipage}{4cm}
\begin{center}
\includegraphics[width=4cm,height=2.5cm]{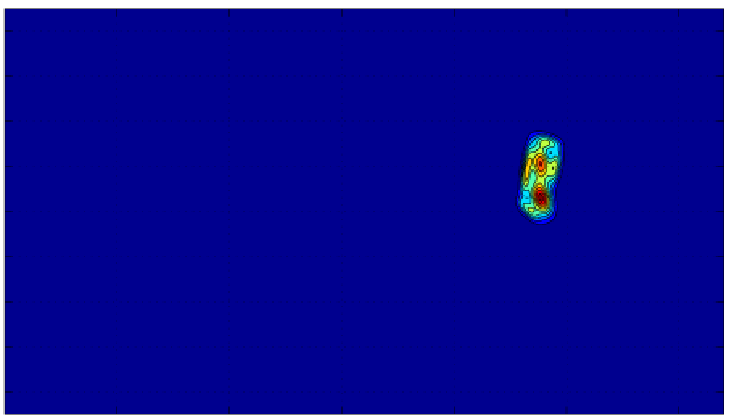}
\end{center}
\end{minipage}
\hspace{.2cm}
\begin{minipage}{4cm}
\begin{center}
\includegraphics[width=4cm,height=2.5cm]{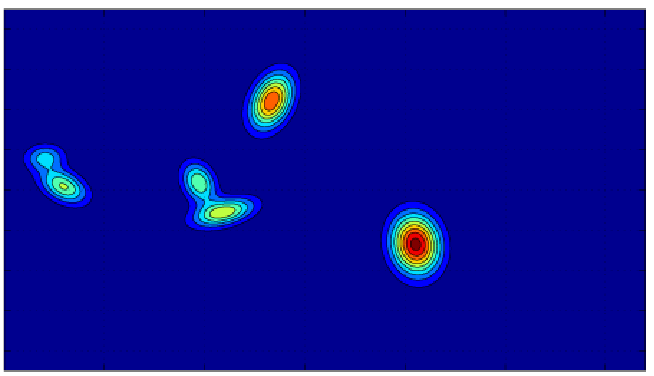}
\end{center}
\end{minipage}
\caption{Approximation of data at $t=0$s (left) and $t = 4$s (right).}
\label{fig:curve_fit}
\end{center}
\end{figure}

Then, the set $SFOV^c_i(R_{\V})$ is computed as
\begin{eqnarray}
SFOV^c_i(R_{\V}) = \bigcup_{j\in \V\setminus \{i\}} (\Q_{ij} \cap FOV_i \cap FOV_j).
\label{eqn:SFOV_check}
\end{eqnarray}
in polynomial time with respect to $n$.
Namely, checking $q_{wl}(\ewi) \in SFOV^c_i(R_{\V})$
for all $l\in \L_i$
provides  $\tilde \L_i(R_{\V})$.

The computation process including image processing, 
curve fitting and gradient computation
is successfully distributed to each sensor but the resulting FOVs need to be shared
among all sensors to compute $\tilde \L_i(R_{\V})$.
A way to implement the present scheme under 
limited communication
is to restrict the FOV of each sensor so that the FOV can overlap
with limited number of FOVs of the other sensors.
Such constraints on the FOVs are easily imposed by adding
an artificial potential to the objective function but 
 we leave the issue as a future work due to the page constraints.

\section{Simulation of Moving Objects Monitoring}

In this section, we demonstrate the utility of the
present approach through simulation using 4 cameras with $\lambda_i = 3.4$mm ${\forall i}$.
Here, we suppose that the view of the environment from $\Sigma_w$ with $\gamma = 10$m and focal length 
$3.4$mm is given as in 
Fig. \ref{fig:initial_image},
and that the mission space $\Q$ is equal to the FOV corresponding to the image.
Since the codes of simulating the image acquisition and processing are never used in experiments,
we simplify the process as follows, and demonstrate only the present coverage control scheme 
with the curve fitting process.
Before running the simulation, we compute the optical flows
for the images of Fig. \ref{fig:initial_image} as in Fig. \ref{fig:of_image},
and also fitting functions of the data as in Fig. \ref{fig:curve_fit}.
The resulting data is uploaded at
\url{http://www.fl.ctrl.titech.ac.jp/paper/2014/data.wmv}.
Then, we segment the image by the superlevel set
of the function using a threshold $10^{-3}$,
and assign a boolean variable $1$ to $y_{il}$ if 
$q_{wl}(\ewi)$ is inside of the set and assign $0$ otherwise.
The experimental system is now under construction, and the
experimental verification of the total process
will be conducted in a future work.
Note however that it is at least confirmed that 
the skipped image acquisition and processing 
can be implemented within several milliseconds in a real vision system.

\begin{figure}[t]
\begin{center}
\begin{minipage}{4cm}
\begin{center}
\includegraphics[width=4cm,height=2.5cm]{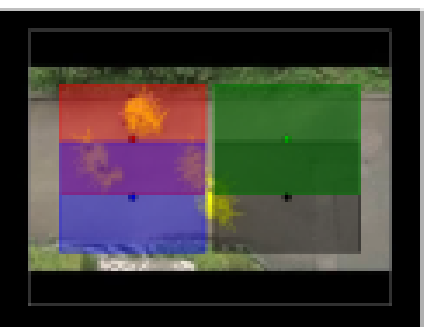}
\end{center}
\end{minipage}
\hspace{.2cm}
\begin{minipage}{4cm}
\begin{center}
\includegraphics[width=4cm,height=2.5cm]{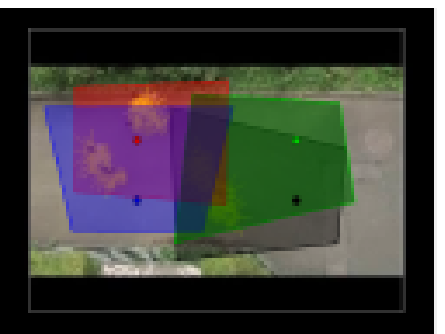}
\end{center}
\end{minipage}
\caption{Initial FOVs (left) and final FOVs (right).}
\label{fig:static}
\medskip

\includegraphics[width=4cm]{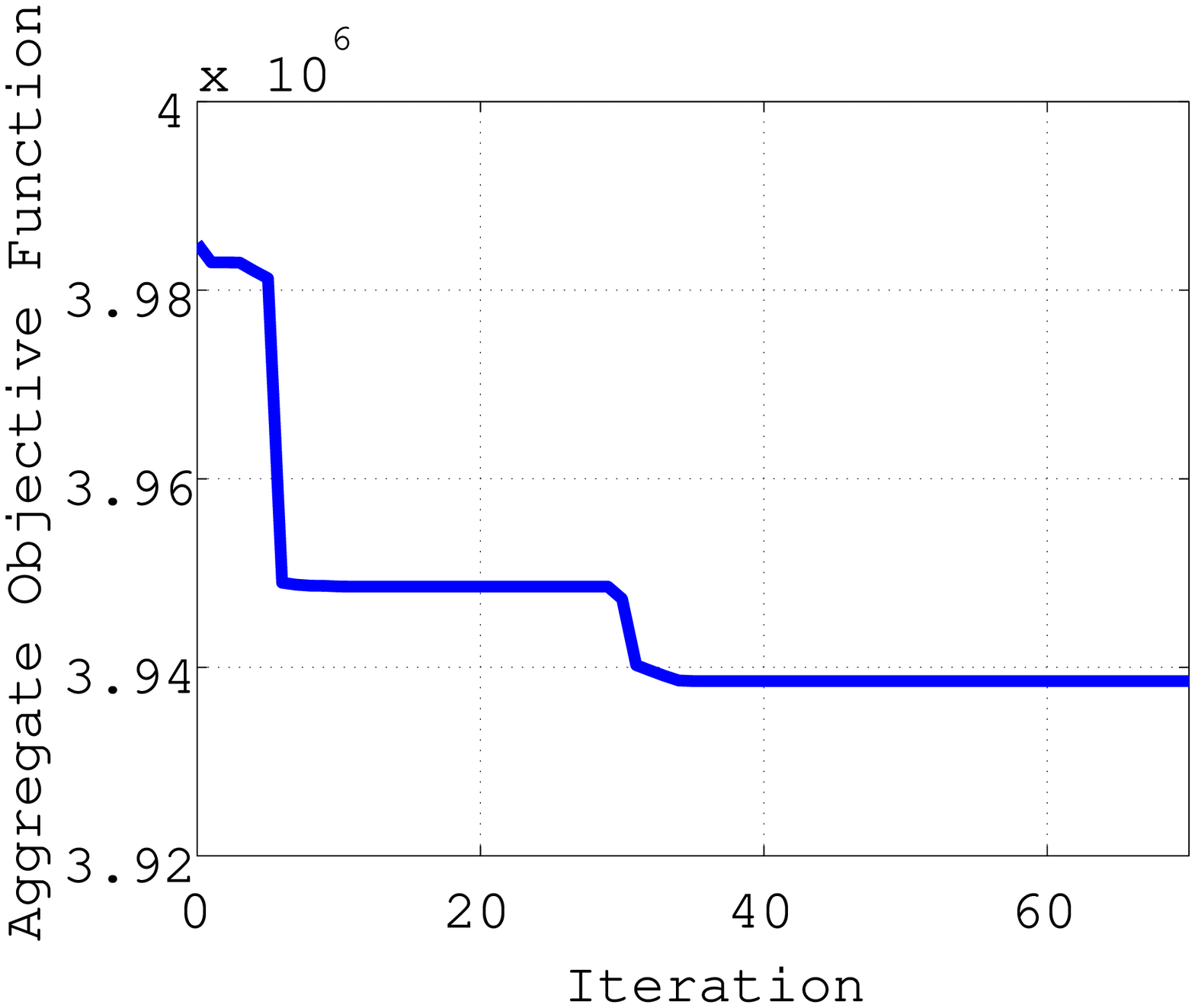}
\caption{Evolution of objective function $H$.}
\label{fig:obj}
\end{center}
\end{figure}

Let the position vectors of cameras be selected as $e_3^T p_{wi} = 6\mbox{m},\ {\forall i}$
and the length of each side of the image plane be $6.4$mm and $4.8$mm.
The other elements of $p_{wi}$ are set as illustrated by the mark $\bullet$ in Fig. \ref{fig:static}.
The parameters in $H$ is set as 
$\bar \psi = 1$, $\bar \phi = 1.05$, 
$w_{il} = \frac{\|p^v_{i1}\|+4\times 10^{-4}}{\|p_{il}\|+4\times 10^{-4}}$ 
and $W = {\rm diag}([0.01\ 0.01\ 1])$.
The curve fitting process
is run with $m = 3$ and the gradient is computed
by evaluating the objective function not at all points in $\L_i$
but at 121 points extracted from $\L_i$.
In order to confirm convergence of the orientations,
we first fix the image as in Fig. \ref{fig:static}
and run the present algorithm from the initial condition $\ewi = I_3\ {\forall i}$.
Then, the evolution of the function $H$ is illustrated in Fig. \ref{fig:obj},
where we compute the value using not the individually estimated density but
the data as in Fig. \ref{fig:curve_fit}.
We see from the figure that the function $H$ is decreasing
through the update process and eventually reaches a stationary point.
The final configuration is depicted in the right figure of Fig. \ref{fig:static}.
We next start to play the above movie and 
check adaptability to environmental changes,
where the orientations are assumed to be updated 
at each frame.
Then, the evolution of FOVs are shown in
\url{http://www.fl.ctrl.titech.ac.jp/paper/2014/sim.wmv}
whose snapshots at times $t = 0,1,2,4$ are depicted in Fig. \ref{fig:snaps}.
We see from the movie and figures that the cameras adjust their rotations
so as to capture moving humans.
The above results show the effectiveness of the present approach.

\section{Conclusions}

In this paper, we have investigated visual coverage control 
where the vision sensors
are assumed to be distributed over the 3-D space to 
monitor the 2-D environment
and to be able to control their orientations.
We first have formulated the problem as an optimization
problem on $SO(3)$.
Then, in order to solve the problem, we have presented
the entire process including not only the gradient computation
but also image processing and curve fitting, which are 
required to estimate the density function
from the acquired vision data.
Finally, we have demonstrated the effectiveness of the approach
through simulation of moving objects monitoring.

\appendix

\section{Proof of Theorem 1}
\label{app:1}

For notational simplicity, we describe $\ewi[k]$ by $R$ in the sequel.
Substituting (\ref{eqn:perf5}), (\ref{eqn:perf6}) and (\ref{eqn:mix_Gauss})
into (\ref{eqn:obj_single_fict}), the objective function to be minimized
is formulated as
\begin{eqnarray}
\!\!\!\!\! \bar H_i(M) \!\!&\!\!=\!\!&\!\! 
\tilde \delta_i \bar{\phi} \!\!
\sum_{l \in \tilde \L_i^c}\!\!w_{il}H_i^l+
\tilde \delta_i 
\!\!\sum_{l \in \tilde \L_i}\!\!w_{il}
\Big( \bar \psi H_i^l - \!\!
\sum_{j=1}^m \!\! \alpha_j H_i^{lj}\Big),
\label{eqn:obj_single_fict3}\\
\!\!\!\!\! H_i^l(M) \!\!&\!\!=\!\!&\!\!  \frac{ \|Mp_{il}\|_W^2}{\|{\bf e}_3^T M p_{il}\|^2},\
 H_i^{lj}(M) =
\frac{\|Mp_{il}\|_W^2}{\|{\bf e}_3^T M p_{il}\|^2}
E_i^{lj}(M),
\nonumber\\
\!\!\!\!\! E_i^{lj}(M) \!\!&\!\!=\!\!&\!\!
\exp\left\{-
\left\|
\frac{\lambda_i R^T Mp_{il}}
{{\bf e}_3^T R^T Mp_{il}}
- \mu_j \right\|^2_{\Sigma_j}
\right\}.
\nonumber
\end{eqnarray}

\begin{figure}[t]
\begin{center}
\begin{minipage}{4cm}
\begin{center}
\includegraphics[width=4cm,height=2.5cm]{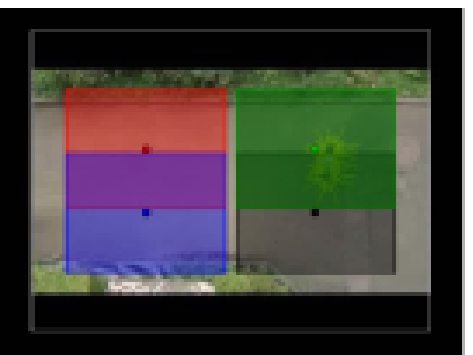}
\end{center}
\end{minipage}
\hspace{.2cm}
\begin{minipage}{4cm}
\begin{center}
\includegraphics[width=4cm,height=2.5cm]{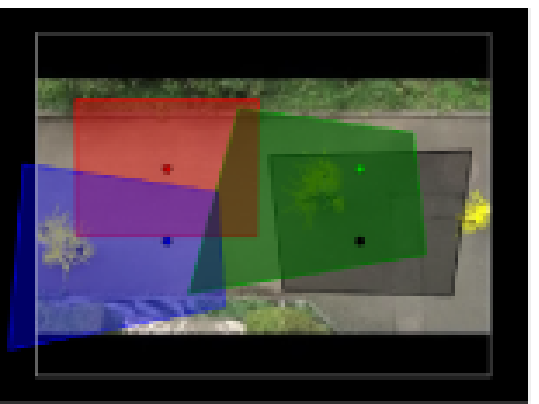}
\end{center}
\end{minipage}
\medskip

\begin{minipage}{4cm}
\begin{center}
\includegraphics[width=4cm,height=2.5cm]{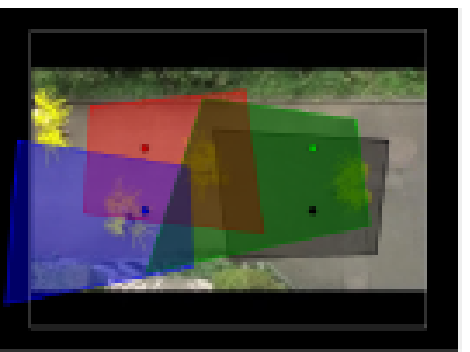}
\end{center}
\end{minipage}
\hspace{.2cm}
\begin{minipage}{4cm}
\begin{center}
\includegraphics[width=4cm,height=2.5cm]{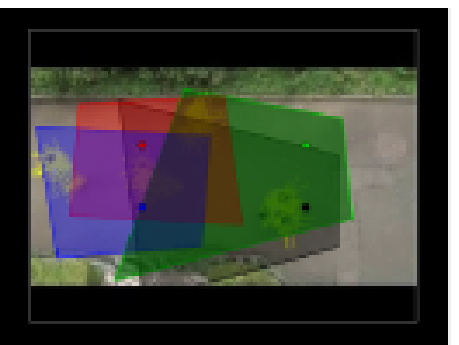}
\end{center}
\end{minipage}
\caption{Snapshots at $t=0$s (top-left), $t=1$s (top-right), $t=2$s (bottom-left) and $t = 4$s (bottom-right).}
\label{fig:snaps}
\end{center}
\end{figure}

From Lemma 1 and the fact that $SO(3)$
is a submanifold of $\R^{3\times 3}$,
we first compute the gradient $\grad_{R}^{\R^{3\times 3}} \bar H_i(R)$.
From Definition 1 and (\ref{eqn:derivative_lin_M}),
%
we need to compute the directional derivative
$D \bar H_i(R)[\Xi]$.
From linearity of the directional derivative operator $D$,
it is sufficient to derive 
$DH_i^l(R)[\Xi]$ and $DH_i^{lj}(R)[\Xi]$.

We first consider $DH_i^l(R)[\Xi]$.
By calculation, we have
\begin{eqnarray}
\hspace{-.8cm}&&\tilde H^{l}_i(R + \Xi t) - \tilde H^{l}_i(R)=
\frac{ \|(R+\Xi t)p_{il}\|_W^2}{\|{\bf e}_3^T (R+\Xi t) p_{il}\|^2}
- \frac{\bar \psi \|R p_{il}\|_W^2}{\|{\bf e}_3^T R p_{il}\|^2}
\nonumber\\
\hspace{-.8cm}&&\hspace{.2cm}
= \frac{\|{\bf e}_3^T R p_{il}\|^2\|(R+\Xi t)p_{il}\|_W^2}
{\|{\bf e}_3^T (R+\Xi t) p_{il}\|^2\|{\bf e}_3^T R p_{il}\|^2}
\nonumber\\
\hspace{-.8cm}&&\hspace{3.2cm}
- \frac{\|{\bf e}_3^T (R+\Xi t) p_{il}\|^2\|Rp_{il}\|_W^2}
{\|{\bf e}_3^T (R+\Xi t) p_{il}\|^2\|{\bf e}_3^T R p_{il}\|^2}
\nonumber\\
\hspace{-.8cm}&&\hspace{.2cm}
= 2t \frac{({\bf e}_3^T R p_{il})(p_{il}^TR^TW\Xi p_{il})
- \|Rp_{il}\|_W^2 ({\bf e}_3^T\Xi p_{il}) + o(t)}
{\|{\bf e}_3^T (R+\Xi t) p_{il}\|^2({\bf e}_3^T R p_{il})}
\nonumber
\end{eqnarray}
Hence, $D H_i^l(R)[\Xi] = \lim_{t\to 0}\frac{\tilde H^{l}_i(R + \Xi t) - \tilde H^{l}_i(R)}{t}$
is given by
\begin{eqnarray}
D H_i^l(M)[\Xi]
\!\!&\!\!=\!\!&\!\!
\tilde \eta_i^l(R)\Xi p_{il},
\label{eqn:modify2}\\
\eta_i^l(R) \!\!&\!\!=\!\!&\!\! \frac{2(
({\bf e}_3^T R p_{il})p_{il}^TR^TW
- \|Rp_{il}\|_W^2 {\bf e}_3^T
)}{({\bf e}_3^T R p_{il})^3}.
\nonumber
\end{eqnarray}

Let us next consider  
$D \bar H_i^{lj}(R)[\Xi]$. 
We first have the equations
\begin{eqnarray}
\hspace{-.8cm}&&\bar H^{lj}_i(R + \Xi t) =
\frac{\|(R+\Xi t) p_{il}\|_W^2E^{lj}_i(R + \Xi t)}{\|{\bf e}_3^T (R + \Xi t) p_{il}\|^2}
\nonumber\\
\hspace{-.8cm}&&\hspace{.3cm}= \frac{(\|Rp_{il}\|_W^2 + 2tp_{il}^TR^TW\Xi p_{il} + o(t^2))
E^{lj}_i(R + \Xi t)}{
\|{\bf e}_3^TR p_{il}\|^2 + 2t 
({\bf e}_3^TR p_{il})({\bf e}_3^T\Xi p_{il})
+ t^2 \|{\bf e}_3^T\Xi p_{il}\|^2.
}
\nonumber
\end{eqnarray}
Hence, we also have
\begin{eqnarray}
\hspace{-.8cm}&&\bar H^{lj}_i(R + \Xi t) - \bar H^{lj}_i(R)
\nonumber\\
\hspace{-.8cm}&&= \frac{(\|Rp_{il}\|_W^2 + 2tp_{il}^TR^TW\Xi p_{il} + o(t^2))E^{lj}_i(R + \Xi t)}{
\|{\bf e}_3^TR p_{il}\|^2 + 2t 
({\bf e}_3^TR p_{il})({\bf e}_3^T\Xi p_{il})
+ t^2 \|{\bf e}_3^T\Xi p_{il}\|^2}
\nonumber\\
\hspace{-.8cm}&&\hspace{3cm}- \frac{\|Rp_{il}\|_W^2E^{lj}_i(R)}{\|{\bf e}_3^T R p_{il}\|^2}
\nonumber\\
\hspace{-.8cm}&& =  
\frac{\left\{\|Rp_{il}\|_W^2 + 2tp_{il}^TR^TW\Xi p_{il}\right\}
\|{\bf e}_3^T R p_{il}\|^2
E^{lj}_i(R + \Xi t)}{ \|{\bf e}_3^TR p_{il}\|^4 + o(t)}
\nonumber\\
\hspace{-.8cm}&&
-\frac{\left\{\|{\bf e}_3^TR p_{il}\|^2 + 2t 
({\bf e}_3^TR p_{il})({\bf e}_3^T\Xi p_{il})\right\}
\|Rp_{il}\|_W^2E^{lj}_i(R)
}{\|{\bf e}_3^TR p_{il}\|^4 + o(t)}
\nonumber\\
\hspace{-.8cm}&&\hspace{3cm} + o(t^2)
\nonumber\\
\hspace{-.8cm}&& =  
\frac{\left\{\|Rp_{il}\|_W^2 + 2tp_{il}^TR^TW\Xi p_{il}\right\}
({\bf e}_3^T R p_{il})}{ ({\bf e}_3^TR p_{il})^3 + o(t)}E^{lj}_i(R + \Xi t)
\nonumber\\
\hspace{-.8cm}&&\hspace{1cm}
-\frac{\left\{({\bf e}_3^TR p_{il}) + 2t 
({\bf e}_3^T\Xi p_{il})\right\}
\|Rp_{il}\|_W^2
}{({\bf e}_3^TR p_{il})^3 + o(t)}E^{lj}_i(R)
\nonumber\\
\hspace{-.8cm}&&\hspace{3cm} + o(t^2).
\label{eqn:app1}
\end{eqnarray}
We also obtain
\begin{eqnarray}
E^{lj}_i(R + \Xi t) \!\!&\!\!=\!\!&\!\! 
\exp\left\{-
\left\|
\frac{\lambda_i 
(I_3 + tR^T\Xi)p_{il}}{{\bf e}_3^T (I_3 + tR^T\Xi )p_{il}}
- \mu_j \right\|^2_{\Sigma_j}\right\}
\nonumber\\
\!\!&\!\!=\!\!&\!\!  \exp\left\{-\left\|
 \frac{\lambda_i(b_{lj} + c_{lj}t)}{ \lambda_i +  {\bf e}_3^Ta_{l}t}
\right\|^2_{\Sigma_j}\right\},
\label{eqn:app3}
\end{eqnarray}
where $a_l =  R^T\Xi p_{il}$, 
$b_{lj} = p_{il} - \mu_j$ and $c_l = a_l - \frac{ {\bf e}_3^Ta_{l} }{\lambda_i}\mu_j$
are introduced for notational simplicity.
Using $e^a = \sum_{k=0}^{\infty}\frac{a^k}{k!}$,
we can decompose low and high order terms in $t$
as
\begin{eqnarray}
\hspace{-.8cm}&&E^{lj}_i(R + \Xi t) 
\nonumber\\
\hspace{-.8cm}&&= \exp\left\{-
\frac{\lambda_i^2\left(
\|b_{lj}\|^2_{\Sigma_j} + 2 tb_{lj}^T \Sigma_j c_{lj}
+ t^2 \|c_{lj}\|^2_{\Sigma_j}
\right)}{ (\lambda_i + {\bf e}_3^Ta_{l}t)^2}
\right\},
\nonumber\\
\hspace{-.8cm}&&= \sum_{k = 0}^{\infty}
\frac{1}{k!}
\left(
\frac{-\lambda_i^2}{ (\lambda_i + {\bf e}_3^Ta_{l}t)^2}
\right)^k\times
\nonumber\\
\hspace{-.8cm}&&\hspace{1.5cm}
\times
\left(
\|b_{lj}\|^2_{\Sigma_j} + 2 tb_{lj}^T \Sigma_j c_{lj}
+ t^2 \|c_{lj}\|^2_{\Sigma_j}
\right)^k
\nonumber\\
\hspace{-.8cm}&&=
1 + 
\sum_{k = 1}^{\infty}
\frac{1}{k!}
\left(
\frac{-\lambda_i^2}{ (\lambda_i + {\bf e}_3^Ta_{l}t)^2}
\right)^k\times
\nonumber\\
\hspace{-.8cm}&&\hspace{0.8cm}
\times
\left(
(\|b_{lj}\|^2_{\Sigma_j})^k + 2k t (\|b_{lj}\|^2_{\Sigma_j})^{k-1}(b_{lj}^T \Sigma_j c_{lj})
+ o(t^2)
\right)
\nonumber\\
\hspace{-.8cm}&&= \left(1 + 
\sum_{k = 1}^{\infty}
\frac{1}{k!}
\left(
\frac{-\lambda_i^2\|b_{lj}\|^2_{\Sigma_j}}{ (\lambda_i + {\bf e}_3^Ta_{l}t)^2}
\right)^k\right)+ o(t^2)
\nonumber\\
\hspace{-.8cm}&&\hspace{.4cm}-
\frac{2t\lambda_i^2(b_{lj}^T \Sigma_j c_{lj})}{ (\lambda_i + {\bf e}_3^Ta_{l}t)^2}
\left(
\sum_{k = 1}^{\infty}
\frac{1}{(k-1)!}
\left(
\frac{-\lambda_i^2\|b_{lj}\|^2_{\Sigma_j}}{ (\lambda_i + {\bf e}_3^Ta_{l}t)^2}
\right)^{k-1}\right)
\nonumber\\
\hspace{-.8cm}&&=
\left(
1 - \frac{2\lambda_i^2(b_{lj}^T \Sigma_j c_{lj})t}{ (\lambda_i + {\bf e}_3^Ta_{l}t)^2}
\right)
\left(\sum_{k = 0}^{\infty}
\frac{1}{k!}
\left(
\frac{-\lambda_i^2\|b_{lj}\|^2_{\Sigma_j}}{ (\lambda_i + {\bf e}_3^Ta_{l}t)^2}
\right)^k\right) 
\nonumber\\
\hspace{-.8cm}&&\hspace{4.4cm}
+ o(t^2)
\label{eqn:app4}
\end{eqnarray}
(\ref{eqn:app4}) is also simplified as
\begin{eqnarray}
\hspace{-.8cm}&&E^{lj}_i(R + \Xi t) =
h_i^{lj}(t)+o(t^2)
\label{eqn:app4}\\
\hspace{-.8cm}&&h_i^{lj}(t) :=
\left(
1 - \frac{2\lambda_i^2(b_{lj}^T \Sigma_j c_{lj})t}{ (\lambda_i + {\bf e}_3^Ta_{l}t)^2}
\right)
\exp\left\{
\frac{-\lambda_i^2\|b_{lj}\|^2_{\Sigma_j}}{ (\lambda_i + {\bf e}_3^Ta_{l}t)^2}
\right\},
\nonumber
\end{eqnarray}
where
\begin{eqnarray}
h_i^{lj}(0) = 
\exp\left\{-\|b_{lj}\|^2_{\Sigma_j}
\right\} 
= E_i^{lj}(R).
\label{eqn:app6}
\end{eqnarray}
Substituting (\ref{eqn:app4}) and (\ref{eqn:app6}) 
into (\ref{eqn:app1}) yields
\begin{eqnarray}
\hspace{-.8cm}&& H^{lj}_i(R + \Xi t) -  H^{lj}_i(R)
=  
\frac{\|Rp_{il}\|_W^2({\bf e}_3^T R p_{il})}
{({\bf e}_3^TR p_{il})^3 + o(t)}\times
\nonumber\\
\hspace{-.8cm}&& \hspace{.5cm} 
\times (h_i^{lj}(t) - h_i^{lj}(0))
+ 2t\frac{
(p_{il}^TR^TW\Xi p_{il})
({\bf e}_3^T R p_{il})h_i^{lj}(t)
}{({\bf e}_3^TR p_{il})^3 + o(t)}
\nonumber\\
\hspace{-.8cm}&& \hspace{.5cm} 
- 2t\frac{
({\bf e}_3^T\Xi p_{il})
\|Rp_{il}\|_W^2 h_i^{lj}(0)
}{({\bf e}_3^TR p_{il})^3 + o(t)}
+ o(t^2)
\label{eqn:app7}
\end{eqnarray}

Let us now compute $D  H^{lj}_i(R)[\Xi]$.
Substituting (\ref{eqn:app7}) into the definition of the directional derivative (\ref{eqn:derivative_lin_M}),
i.e.
\begin{eqnarray}
D \bar H^{lj}_i(R)[\Xi]
\!\!&\!\!=\!\!&\!\!\lim_{t \to 0}\frac{\bar H^{lj}_i(R + \Xi t) - \bar H^{lj}_i(R)}{t},
\label{eqn:app8}
\end{eqnarray}
we have
\begin{eqnarray}
\hspace{-1.2cm}&&D  H^{lj}_i(R)[\Xi]
=
\frac{\|Rp_{il}\|_W^2({\bf e}_3^T R p_{il})}
{({\bf e}_3^TR p_{il})^3}
\left(\frac{d h_i^{lj}}{dt}\right)(0)
\nonumber\\
\hspace{-1.2cm}&& \hspace{.2cm} + 2h_i^{lj}(0)
\frac{
({\bf e}_3^TR p_{il})p_{il}^TR^TW\Xi p_{il}
-\|Rp_{il}\|_W^2{\bf e}_3^T\Xi p_{il}
}{({\bf e}_3^TR p_{il})^3}
\label{eqn:app9}
\end{eqnarray}
By calculation, the derivative $\frac{d h_i^{lj}}{dt}$ is given by
\begin{eqnarray}
\hspace{-.8cm}&&\frac{d h_i^{lj}}{dt} = 
\exp\left\{
\frac{-\lambda_i^2\|b_{lj}\|^2_{\Sigma_j}}{ (\lambda_i + {\bf e}_3^Ta_{l}t)^2}
\right\}
\frac{1}{(\lambda_i + {\bf e}_3^Ta_{l}t)^3}\times
\nonumber\\
\hspace{-.8cm}&&\times
\Big( 2 \lambda_i^2\|b_{lj}\|^2_{\Sigma_j} {\bf e}_3^Ta_{l} 
- \frac{4 \lambda_i^4(b_{lj}^T \Sigma_j c_{lj})\|b_{lj}\|^2_{\Sigma_j} {\bf e}_3^Ta_{l}  }
{(\lambda_i + {\bf e}_3^Ta_{l}t)^2}t
\nonumber\\
\hspace{-.8cm}&&\hspace{.5cm}
- 2\lambda_i^3(b_{lj}^T \Sigma_j c_{lj}) + 2 \lambda_i^2(b_{lj}^T \Sigma_j c_{lj}){\bf e}_3^Ta_{l} t
\Big)
\end{eqnarray}
and hence
\begin{eqnarray}
\hspace{-0.5cm}
\frac{d h_i^{lj}}{dt}(0) \!\!&\!\!=\!\!&\!\! 
\frac{2e^{-\|b_{lj}\|^2_{\Sigma_j}}}{\lambda_i}
\Big( \|b_{lj}\|_{\Sigma_j}^2 {\bf e}_3^Ta_{l} 
- \lambda_i(b_{lj}^T \Sigma_j c_{lj})
\Big).
\label{eqn:app10}
\end{eqnarray}
Substituting (\ref{eqn:app10}) and definitions of $a_l$ and $c_l$
into (\ref{eqn:app9})
yields
\begin{eqnarray}
\hspace{-.8cm}&&D H^{lj}_i(R)[\Xi]
= \bar \eta_{i}^{lj}(R) \Xi p_{il}
\label{eqn:app13}\\
\hspace{-.8cm}&&\eta_{i}^{lj}(R) =
\frac{2e^{-\|b_{lj}\|^2_{\Sigma_j}}}
{\lambda_i({\bf e}_3^TR p_{il})^3}
\Big(({\bf e}_3^TR p_{il})\xi^{lj}_i(R)
-\lambda_i\|Rp_{il}\|_W^2{\bf e}_3^T
\Big)
\nonumber\\
\hspace{-.8cm}&&\xi^{lj}_i(R) =
 \|Rp_{il}\|_W^2b_{lj}^T \Sigma_j (p_{il} {\bf e}_3^T - \lambda_i I_3)R^T
+ \lambda_i p_{il}^TR^TW.
\nonumber
\end{eqnarray}
Note that $b_{lj} = p_{il} - \mu_j$ is constant and 
$\eta_{i}^{lj}(R)$ is independent of the matrix $\Xi$.

From (\ref{eqn:obj_single_fict3}), (\ref{eqn:modify2}) and (\ref{eqn:app13}),
we obtain
\begin{eqnarray}
\hspace{-.8cm}&&D\bar H_i(R)[\Xi] = \tilde \delta_i 
\eta_i(R) \Xi p_{il}
=\tr \Big(\Xi^T \Big(\tilde \delta_i
\eta_i^T(R) p_{il}^T \Big)\Big),
\nonumber\\
\hspace{-.8cm}&&\eta_i(R) =
\sum_{l \in \tilde \L_i^c(R)}w_{il}
\bar{\phi}  \eta_i^l+\!\!
\sum_{l \in \tilde \L_i(R)}\!\! w_{il}
\Big(
\bar{\psi} \eta_i^l
- \sum_{j=1}^m \alpha_j \eta_{i}^{lj}
\Big).
\nonumber
\end{eqnarray}
From Definition \ref{def:grad_M},
we have
$\grad_{R}^{\R^{3\times 3}}\bar H_i = \tilde \delta_i \eta_i^T(R) p_{il}^T$.
Combining it with Lemma 1 and
(\ref{eqn:proj_so(3)}) completes the proof.

\end{document}